\documentclass[12pt, draftclsnofoot, onecolumn]{IEEEtran}

\usepackage[ruled,vlined]{algorithm2e}
\usepackage{float}
\usepackage{amsmath}
\usepackage{subfigure}
\usepackage{graphicx}
\usepackage{caption}
\usepackage{cite}
\usepackage{amssymb}
\usepackage{ntheorem}
\usepackage{epstopdf}

\newtheorem{definition}{Definition}
\newtheorem{remark}{Remark}
\newtheorem{lemma}{Lemma}
\newtheorem*{proof}{Proof}

\newtheorem{theorem}{Theorem}
\newtheorem{proposition}{Proposition}

\begin{document}
\title{UAV Deployment, Device Scheduling and Resource Allocation for Energy-Efficient UAV-Aided IoT Networks with NOMA}
%

\author{
Jingjing\ Zhao, Kaiquan Cai, and Yanbo Zhu

\thanks{J.~Zhao is with the Research Institute for Frontier Science, Beihang University, Beijing 100191, China (e-mail: jingjingzhao@buaa.edu.cn). }
\thanks{K.~Cai is with the School of Electronic and Information Engineering, Beihang University, Beijing 100191, China (e-mail: caikq@buaa.edu.cn). }
\thanks{Y.~Zhu is with the School of Electronic and Information Engineering, Beihang University, Beijing 100191, China. He is also with the Aviation Data Communication Corporation, Beijing 100191, China (e-mail: zhuyanbo@buaa.edu.cn).}
}
\maketitle

\pagestyle{empty}
\thispagestyle{empty}

\begin{abstract}
This article investigates the energy efficiency issue in non-orthogonal multiple access (NOMA)-enhanced Internet-of-Things (IoT) networks, where a mobile unmanned aerial vehicle (UAV) is exploited as a flying base station to collect data from ground devices via the NOMA protocol. With the aim of maximizing network energy efficiency, we formulate a joint problem of UAV deployment, device scheduling and resource allocation. First, we formulate the joint device scheduling and spectrum allocation problem as a three-sided matching problem, and propose a novel low-complexity near-optimal algorithm. We also introduce the novel concept of `exploration' into the matching game for further performance improvement. By algorithm analysis, we prove the convergence and stability of the final matching state. Second, in an effort to allocate proper transmit power to IoT devices, we adopt the Dinkelbach's algorithm to obtain the optimal power allocation solution. Furthermore, we provide a simple but effective approach based on disk covering problem to determine the optimal number and locations of UAV's stop points to ensure that all IoT devices can be fully covered by the UAV via line-of-sight (LoS) links for the sake of better channel condition. Numerical results unveil that: i) the proposed joint UAV deployment, device scheduling and resource allocation scheme achieves much higher EE compared to predefined stationary UAV deployment case and fixed power allocation scheme, with acceptable complexity; and ii) the UAV-aided IoT networks with NOMA greatly outperforms the OMA case in terms of number of accessed devices.
\end{abstract}

\begin{IEEEkeywords}
UAV communication, Internet of Things (IoT), non-orthogonal multiple access (NOMA), energy efficiency, resource allocation, device scheduling, UAV deployment.
\end{IEEEkeywords}

\section{Introduction}
In the past several years, the use of unmanned aerial vehicles (UAVs) as flying communication platforms to boost the capacity and coverage of current wireless networks has attracted fast-growing interests \cite{7470933, 7317490, 8432474}. Different from terrestrial wireless communications, UAV-aided networks possess many appealing advantages including high mobility, flexible deployment, low cost, and line-of-sight (LoS) aerial-to-ground (A2G) links \cite{7876852}. Therefore, UAVs are expected to bring in promising gains to numerous use cases in next generation wireless networks. Particularly, the Internet-of-Things (IoT) consisting of a large scale of small devices (e.g., sensors, health monitors) is restricted by short distance transmission due to the energy constraints \cite{5741148}. In this case, UAVs can be despatched to collect data from IoT devices and transmit it to other devices or the data center out of the communication range. On the one hand, the maneuverability enables the dynamic adjustment of UAVs' positions to best suit the communication environment. On the other hand, the presence of LoS connections introduces better channel conditions, thereby improving the transmission efficiency and reducing the transmit power of IoT devices. 

These superiorities have inspired a proliferation of recent studies on the new research diagram of joint trajectory planning and resource allocation in UAV-aided wireless networks \cite{6863654, 8038869, 8247211, 8974403}. In particular, in \cite{6863654}, an analytical approach was presented to optimize the altitude of UAVs to provide maximum radio coverage on the ground. The authors in \cite{8038869} proposed a novel framework for efficiently deploying and moving UAVs to collect data from ground IoT devices.  The work in \cite{8247211} optimized the minimum throughput over all ground users by the joint UAV's trajectory design, multiuser communication scheduling, and power control. In \cite{8974403}, a robust resource allocation algorithm was designed for multiuser downlink multiple-input-single-output UAV communication systems with considering various uncertainties. 

In the IoT networks, massive connectivity is required to support large number of devices in various scenarios with limited spectrum resources \cite{5741148}. To leverage the spectrum resources more efficiently, non-orthogonal multiple access (NOMA) has been envisioned to be a revolutionizing technique for its potential to enhance spectrum efficiency by allowing multiple users simultaneous transmission in the same resource block (RB) \cite{7842433}. More specifically, the fundamental concept of NOMA is to facilitate the access of multiple users in a new dimension-power domain, by means of superposition coding (SC) and successive interference cancellation (SIC) at the transmitter and receiver side, respectively \cite{6868214, ding2016}. It is worth mentioning that due to the employment of superposition coding transmission scheme, the power allocation is an eternal problem to be investigated in NOMA, especially in multiple subchannels/subcarriers/clusters scenarios. Somewhat related power allocation and subchannel/subcarrier/cluster assignment problems have been studied in the context of NOMA \cite{lei2016power, 7954630, 8014491}. In \cite{lei2016power}, with formulating NOMA resource allocation problems under several practical constraints, the tractability of the formulated problem was analytically characterized. A novel resource allocation design was investigated for NOMA-enhanced heterogeneous networks in \cite{7954630} to maximize the system sum rate while guaranteeing users' fairness. In \cite{8014491}, the joint power and subchannel assignment problem was discussed in device-to-device communications with NOMA protocol. 

\subsection{Motivation and Contributions}
To reap the benefits of NOMA in terms of massive connectivity, integration of NOMA into UAV-based wireless networks has attracted some research contributions recently \cite{8663350, 8482444, 8918266, 8685130, 9123495, 8700188}. More particularly, in \cite{8663350}, the placement and power allocation were jointly optimized to improve the performance of the NOMA-UAV network. The optimization of radio resource allocation, decoding order of NOMA process as well as UAV's placement was investigated in \cite{8482444} to maximize sum rate. In \cite{8918266}, the joint placement design, admission control and power allocation was studied for the NOMA-based UAV downlink system to maximize number of connected users with satisfied quality-of-service (QoS) requirements. In \cite{8685130}, the authors investigated the joint trajectory design and resource allocation algorithms to maximize the minimum average rate among ground users for UAV communication systems with NOMA. It is worth noting that the aforementioned work in \cite{8663350, 8482444, 8918266, 8685130} concentrated on the downlink NOMA-UAV system, while the uplink scenario plays a vital role in UAV-aided IoT networks for data collection \cite{7470933}. Accordingly, in \cite{9123495}, a synergetic scheme for UAV trajectory planning and subslot allocation was proposed to maximize the uplink average achievable sum rate of IoT terminals. The authors in \cite{8700188} aimed to maximize the system capacity by jointly optimizing the subchannel assignment, the uplink transmit power of IoT nodes and the flying heights of UAVs. However, both in \cite{9123495} and \cite{8700188}, the UAV deployment and resource allocation schemes were for throughput maximization. With energy efficiency (EE) being a major concern in IoT networks due to constrained energy, studying the EE maximization is rather important. 

Different from previous works, in this paper, we investigate the EE of an uplink UAV-aided IoT networks with integrated NOMA to support a large number of IoT devices. Particularly, we consider the setting of a single UAV flying above the target area while stopping at a number of locations to collect data from ground devices. Thanks to the UAV's flexible deployment, determining the UAV's optimal positions and scheduling devices to transmit when the UAV is located at the best locations can effectively enhance the network capacity and reduce power consumption. Moreover, devices are allowed to transmit to the UAV on the same subchannel simultaneously via the NOMA protocol, as thus proper subchannel and power allocation can significantly improve EE. Consequently, the main contributions of this work are summarized as follows: 
\begin{itemize}
	\item By jointly optimizing UAV deployment, device scheduling and resource allocation, we aim to maximize EE in the UAV-aided IoT networks with NOMA. To obtain the optimal solution, we propose a staged optimization approach by decoupling the formulated problem. 
	\item Assuming given UAV deployment strategy, we model the joint problem of device scheduling and subchannel allocation as a three-sided matching among IoT devices, UAV's stop points (SPs) and subchannels. By converting the three-sided matching to a many-to-one two-sided matching problem between IoT devices and subchannel-SP (SS) units, we propose a near-optimal matching algorithm (JDSSA-1) with low complexity to maximize EE. To further improve the performance of JDSSA-1, we introduce the concept of `exploration' into the matching game and propose a novel algorithm JDSSA-2, where irrational swap decisions are enabled with a small probability to explore the potential matching states. We analyze the proposed algorithms in terms of stability, convergence, complexity and optimality.
	\item Regarding the power allocation for IoT devices associated with the same SS units, the formulated problem is proved to be pseudo-concave. We propose a low-complexity algorithm by applying the Dinkelbach's  algorithm and obtain the optimal power allocation results.
	\item We formulate the UAV deployment issue as a disk covering problem, and propose a simple but effective approach to find the optimal number and locations of SPs such that the UAV can completely cover the target area via LoS links, thereby improving transmission efficiency and reducing power consumption of IoT devices. 
	\item We propose a joint UAV deployment, device scheduling and resource allocation algorithm JUDDSRA. Through extensive simulation, we demonstrate that JUDDSRA can achieve much higher EE compared to the benchmarks with stationary UAV deployment and fixed power allocation. It is also shown that the UAV-aided IoT networks with NOMA achieve much higher number of accessed IoT devices compared to the OMA case. 
\end{itemize}    

\subsection{Organization and Notations}
The rest of this paper is organized as follows. In Section II, the system model of UAV-aided IoT networks with NOMA is presented. Section III formulates the EE maximization problem, and analyzes the computational complexity. In Section IV, we study the joint device scheduling and subchannel allocation problem. Section V addresses the power allocation issue. Section IV discusses the UAV deployment solution. Numerical results are presented in Section VII, which is followed by conclusions of this work in Section VIII.

The notations of this paper are shown in Table I.

\begin{table}[htbp]
\caption{Notations}
\begin{center}
\begin{tabular}{p{0.17\textwidth}|p{0.57\textwidth}}
\hline

\hline
$\mathcal{M}$, $\mathcal{N}$, $\mathcal{K}$ & Set of IoT devices, subchannels, and SPs\\
\hline
$M$, $N$, $K$ & Number of IoT devices, subchannels, and SPs\\
\hline
$R$ & Target area radius\\
\hline
$H$ & UAV altitude\\
\hline
$\boldsymbol{s}_k$ & Position of the $k$-th SP\\
\hline
$g_{m,u,k}$ & Channel power gain between IoT device $m$ and the UAV at SP $k$\\
\hline
$d_{m,u,k}$ & Distance between IoT device $m$ and the UAV at SP $k$\\
\hline
$\eta$ & Unit power gain at the reference distance $d_0 = 1$ m\\
\hline
$\mathcal{M}_{n,k}$ & Set of IoT devices scheduled to transmit to UAV at SP $k$ over subchannel $n$\\
\hline 
$\mathcal{\zeta}$ & Additive white Gaussian noise\\
\hline
$\gamma$ & Signal-to-interference-plus-noise ratio\\
\hline
$R_{LoS}$ & UAV LoS coverage radius\\
\hline
$a_{m,n}$ & Subchannel allocation indicator\\
\hline
$b_{m,k}$ & Device scheduling indicator\\
\hline
$\theta ^{thr}$ & Minimum elevation angle of UAV for LoS coverage\\
\hline
$P_{max}$ & Maximum transmit power of IoT device $m$\\
\hline

\hline
\end{tabular}
\end{center}
\label{table:1}
\end{table}
\section{Network Model}
As shown in Fig.~\ref{fig:system_model}, we consider a UAV-aided IoT communication system consisting of $M$ IoT devices, denoted as $\mathcal{M} = \{ 1, ..., m, ..., M \}$. Without loss of generality, we consider a circular area with radius $R$, centered at $(0, 0, 0)$. The UAV acts as an on-demand aerial access point (AP) to promote the efficiency of data collection. The UAV flies right above the coverage area, at a fixed altitude $H$. The UAV sequentially flies to given positions and hovers to collect data. Let $\mathcal{K} = \{1, ... k, ..., K\}$ denote the set of SPs and $\boldsymbol{s}_{k} = \{ x_k, y_k, H \}$ denote the position of UAV at the $k$-th SP. During the UAV flight, IoT devices connect to the UAV when being scheduled for data transmission. Let $\mathcal{N} = \{ 1, ..., n, ..., N \}$ denote the set of orthogonal subchannels in the system. Due to the dominance of LoS propagation, we assume that the communication links from IoT devices to the UAV are frequency-flat over the subchannels for simplicity. We assume that $M > NK$, and each device has access to only one subchannel. To improve the spectrum efficiency and guarantee that all the IoT devices can be served, multiple devices can share the same subchannel for data transmission via the NOMA protocol. 
In our model, we consider centralized network in which the locations of ground devices and the UAV are known to a control center located at a central cloud server. The cloud server will determine the UAV deployment, device scheduling and resource allocation in the network.

\begin{figure}
  \centering
  \includegraphics[width= 3.3in]{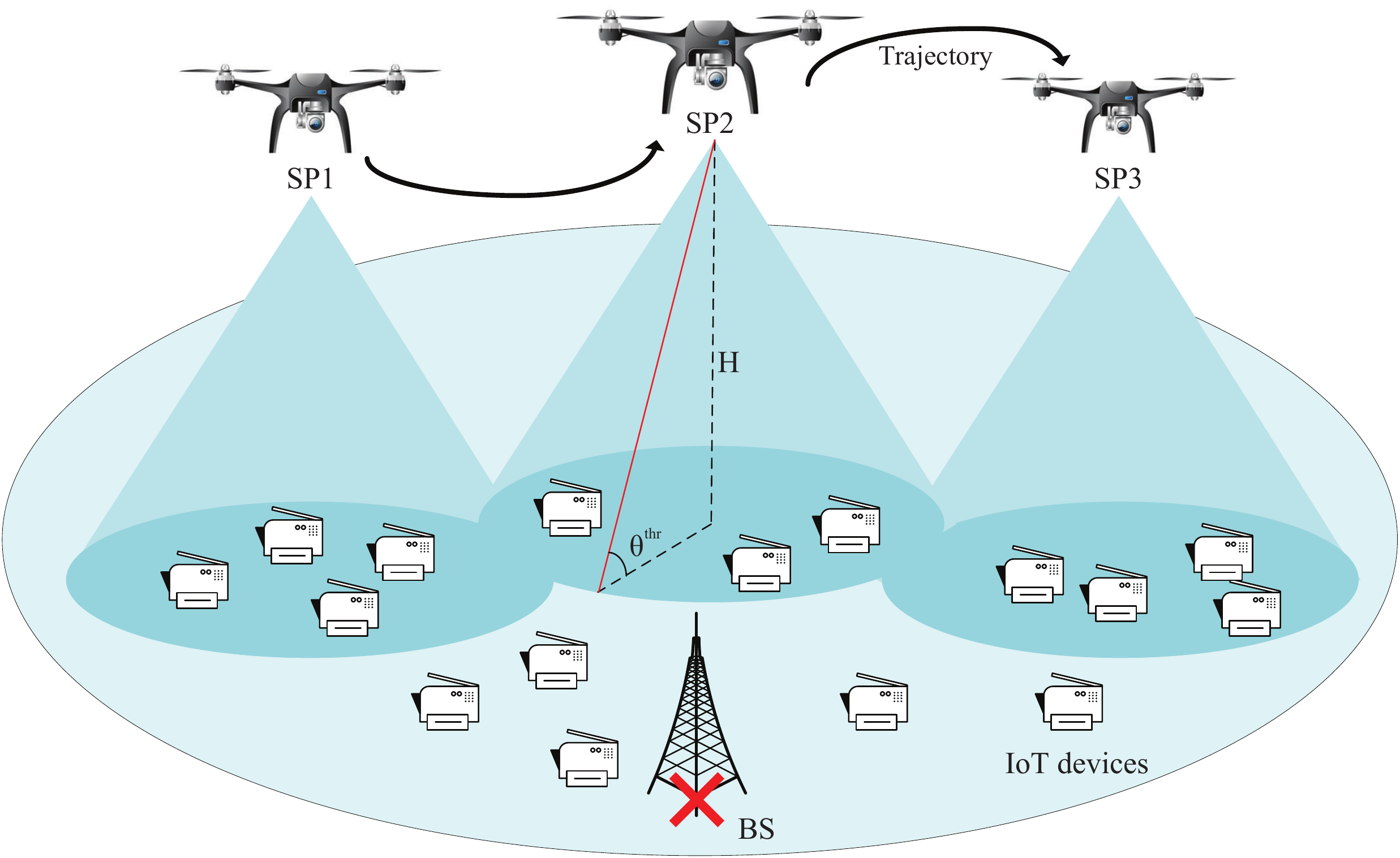}
  \caption{Illustration of the UAV-aided IoT networks scenario.}
  \label{fig:system_model}
\end{figure}

Unlike terrestrial communications, within the coverage of UAV, the LoS links are preferred and are easily obtained. To exploit the benefits of LoS links, in this work, we only consider the LoS communications between UAV and ground devices, according to \cite{9} as well as some recent measurement results \cite{10} \cite{11}. Let $\theta$ denote the elevation angle of UAV. According to the first Fresnel Zone Theory \cite{2}, the smaller the $\theta$ is, the lower the probability for the LoS communication. In order to express the LoS range, at altitude $H$, UAV's minimum elevation angle for LoS coverage is denoted as $\theta^{thr}$. Thus, IoT device $m$ with location $(x_m, y_m, 0)$ can transmit signal to the UAV at SP $k$ via LoS link if the following constraint is satisfied:
\begin{equation}
\label{eq:los-constraint}
\sqrt{(x_m - x_k)^2 + (y_m - y_k)^2} \leq H\times \cot \theta^{thr}.
\end{equation}
Inside the LoS coverage, the channel power gain between IoT device $m$ and UAV at SP $k$, i.e., $g_{m, u, k}$, follows the free-space path loss model, which can be quantified by
\begin{equation}
\label{eq:a2g_channel}
g_{m, u, k} = \frac{\eta}{(d_{m, u, k}) ^2},
\end{equation}
where $\eta$ denotes the unit power gain at the reference distance $d_0 = 1$ m, and $d_{m,u,k}$ denotes the distance between device $m$ and the UAV at SP $k$ with $d_{m,u,k} = \sqrt{(x_m - x_k)^2 + (y_m - y_k)^2 + H^2}$. Note that unlike conventional fading channels with randomness, the channels from devices to the UAV are dominated by LoS link, which makes the time-varying channels to be predictable. As a result, the centralized UAV deployment, device scheduling and resource allocation scheme can be proactively designed to maximize network EE. 

For uplink NOMA, multiple devices are allowed to transmit on the same subchannel, while the UAV needs to adopt SIC to demodulate the targeted messages. On the UAV side, we assume that the decoding order is always from the devices with better channel conditions to the devices with worse channel conditions, on the purpose of saving transmission power for worse devices to compensate the severe channel attenuation~\cite{7542118}. Assume that a set of devices $\mathcal{M}_{n, k}$ are scheduled to transmit to UAV at SP $k$ over subchannel $n$, then the received signal at UAV on subchannel $n$ is given by
\begin{equation}
y_{n, k} = \sum _{m = 1}^{\left|\mathcal{M}_{n, k}\right|}\sqrt{p_{m, n, k}g_{m, u, k}}x_{m, n, k} + \zeta _{n, k},
\end{equation}
where $p_{m, n, k}$ is the transmit power of device $m$ on subchannel $n$. In addition, $x_{m, n, k}$ denotes the signal transmitted from device $m$ over subchannel $n$, with $\mathbb{E}(\left|x_{m, n, k}\right|) = 1$. $\zeta _{n, k}$ represents the additive white Gaussian noise (AWGN) at the UAV with variance $\sigma^2$. Assume that the devices in $\mathcal{M}_{n, k}$ are arranged in descending order with respect to channel conditions, i.e., $g_{1, u, k} > ... > g_{\left|\mathcal{M}_{n, k}\right|, u, k}$, then we can get the SINR expression at the $m$-th device in $\mathcal{M}_{n, k}$ as follows:
\begin{equation}
\label{eq:sinr_mnk}
\gamma _{m, n, k} = \frac{p_{m, n, k}g_{m, u, k}}{\sum _{i = m + 1}^{\left|\mathcal{M}_{n, k}\right|}p_{i, n, k}g_{i, u, k} + \sigma ^2}.
\end{equation}

\section{Problem Formulation}
In this section, we first give definition on network overall EE, and then we formulate the joint UAV deployment, device scheduling and resource allocation optimization problem for UAV-aided IoT uplink transmission system with NOMA.
\subsection{Network EE}
The EE is independent among orthogonal time/frequency resources. As such, we consider the sum EE for subchannels over different UAV SPs in this treatise. It has been shown in \cite{5} that the power consumption of a transmitter mainly consists of two parts: the dynamic power consumed in the power amplifier (PA) for transmission and the static power consumed for circuits. Thus, the total power consumption on subchannel $n$ over SP $k$ can be expressed as
\begin{equation}
\label{eq:total_power}
P_{n, k} = \sum_{m = 1}^{M} a_{m,n}b_{m, k} (p_{m, n, k} + P_0), 
\end{equation}
where $P_0$ denotes the static power consumption for circuits of IoT devices, which is caused by filters, frequency synthesizer, analog-to-digital converters, etc..  $a_{m, n}$ is the subchannel allocation indicator, i.e., if subchannel $n$ is allocated to device $m$, $a_{m, n} = 1$; otherwise, $a_{m, n} = 0$. $b_{m, k}$ is the device scheduling indicator, i.e., if device $m$ transmits data to UAV at SP $k$, $b_{m, k} = 1$; otherwise, $b_{m, k} = 0$.  Therefore, the sum rate for subchannel $n$ over SP $k$ is expressed as
\begin{equation}
\label{eq:sum_rate}
R_{n, k} = \sum_{m = 1}^{M} a_{m,n}b_{m, k} \log _2(1 + \gamma _{m, n, k}).
\end{equation} 
According to (\ref{eq:total_power}) and (\ref{eq:sum_rate}), the network EE is given by
\begin{equation}
\label{eq:sum_ee}
EE = \sum_{n = 1}^N\sum_{k = 1}^K \frac{R_{n, k}}{P_{n, k}}.
\end{equation}

\subsection{Optimization Problem Formulation}
Let $V \triangleq \{\{a_{m, n}\}, \{b_{m, k}\}, \{p_{m, n, k}\}, \{\boldsymbol{s}_k\}\}$. Specifically, the joint UAV deployment, device scheduling and resource allocation problem can be formulated as the following:
\begin{subequations}
\label{eq:optimization_problem}
\begin{equation}
\mathop {\max }_{V} EE,
\end{equation}
\begin{equation}
s.t. \ \ a_{m, n}\in \left\{0,1\right\}, \ \ \forall m, n,
\label{eq:d}
\end{equation}
\begin{equation}
b_{m, k}\in \left\{0,1\right\}, \ \ \forall m, k,
\label{eq:e}
\end{equation}
\begin{equation}
\sum _{n = 1}^N a _{m, n} \leq 1, \ \ \forall m ,
\label{eq:f}
\end{equation}
\begin{equation}
\sum _{k = 1}^K b_{m, k} \leq 1, \ \ \forall m,
\label{eq:g}
\end{equation}
\begin{equation}
\sum _{m = 1}^M a_{m,n}b_{m, k} \leq q, \ \ \forall n, k,
\label{eq:quota_constraint}
\end{equation}
\begin{equation}
\label{eq:los_constraint_pf}
b_{m,k}\sqrt{(x_m - x_k)^2 + (y_m - y_k)^2} \leq H \cot \theta^{thr}, \forall m, k,
\end{equation}	
\begin{equation}
\label{eq:power_geq_0}
\sum _{n = 1}^N\sum _{k = 1}^K a_{m, n}b_{m, k}p_{m,n,k} \geq P_{min}, \ \ \forall m,
\end{equation}
\begin{equation}
\sum _{n = 1}^N\sum _{k = 1}^K a_{m, n}b_{m, k}p_{m, n, k} \leq P_{max}, \ \ \forall m.
\label{eq:c}
\end{equation}
\end{subequations}
Constraint (\ref{eq:d}) and (\ref{eq:e}) show that the values of $a_{m, n}$ and $b_{m, k}$ should be either $0$ or $1$. Constraint (\ref{eq:f}) guarantees that only one subchannel can be allocated to each device, while constraint (\ref{eq:g}) guarantees that each device can transmit at one of UAV's SPs. Constraint~(\ref{eq:quota_constraint}) ensures that at most $q$ IoT devices can be allocated to each subchannel at each of UAV's SPs, which is to reduce the implementation complexity and the co-channel interference. Constraint (\ref{eq:los_constraint_pf}) is imposed to restrict that the IoT devices are scheduled to transmit when they are within the UAV's LoS coverage. (\ref{eq:power_geq_0}) is the minimum transmit power constraint for guaranteeing the fairness among IoT devices. Constraint (\ref{eq:c}) gives the upper bound of transmit power of IoT devices. 

The formulated problem is neither convex nor quasi-convex due to the sum-of-ratios objective function, products of optimization variables in (\ref{eq:power_geq_0})-(\ref{eq:c}), and the binary optimization variables $a_{m, n}$ and $b_{m, k}$. Therefore, there is no systematic and computational efficient approach to solve this problem optimally. Furthermore, an exhaustive search for all possible solutions to obtain the optimal EE has the exponential complexity which is unpractical especially in a dense network. Therefore, in Section IV, \ref{sec:subchannel_allocation} and \ref{sec:power_allocation}, we decouple the formulated problem into three sub-problems: 1) joint device scheduling and subchannel allocation; 2) power allocation for IoT devices; and 3) UAV deployment. We develop low-complexity and effective algorithms for each of the sub-problems to obtain a suboptimal solution for problem (\ref{eq:optimization_problem}). 

\section{Joint Device Scheduling and Subchannel Allocation Based on Matching Game}
\label{sec:subchannel_allocation}
In this section, assuming given UAV deployment and fixed power allocation, we focus on the joint device scheduling and subchannel allocation problem. To describe the dynamic matching among devices, subchannels and UAV at different SPs, we consider the joint device scheduling and subchannel allocation problem as a three-sided matching game. IoT devices, subchannels and UAV at different SPs act as three sets of players and interact with each other to maximize the network EE. We adopt the matching theory \cite{roth1992two,gu2015future}, which provides mathematically tractable and low-complexity solutions for the combinatorial problem of matching players in distinct sets \cite{manlove2013algorithmics}. Subsequently, we propose two efficient algorithms to solve this problem.
\subsection{Three-Sided Matching Game Formulation}

In our matching model, IoT devices, subchannels and UAV at different SPs are three sets of agents who aim to maximize their own profile. Each agent has the order of preferences over the pairs of other agents that they are ready to form triples with. To proceed with proposing the efficient three-sided matching algorithm, we first introduce some notations and basic definitions for the matching model. 

\begin{definition}
In the three-sided matching model, if IoT device $m$, subchannel $n$ and UAV at SP $k$ are matched together, they three compose a \emph{matching triple}, denoted as $\boldsymbol{T} = \left(m, n, k\right)$.
\end{definition}

As shown in (\ref{eq:f}) - (\ref{eq:quota_constraint}), on one hand, each device $m$ can transmit data to UAV at one of UAV's SPs by occupying one subchannel. On the other hand, at most $q$ IoT devices can be allocated to one subchannel at one SP. To better describe the matching process among IoT device, subchannels and SPs, we introduce the concept of ``\textit{subchannel-SP (SS) unit}", i.e., $(n, k) \in \mathcal{N}\times \mathcal{K}$, as the unit between subchannels and SPs. We can then define the three-sided matching model as the following:
\begin{definition}
\label{def:three_sided_matching}
In the three-sided matching model among $\mathcal{M}$, $\mathcal{N}$ and $\mathcal{K}$, a matching $\Omega$ is the set of matching triples which satisfies:

1) $\Omega(m) \in \mathcal{N}\times \mathcal{K}$;

2) $\left|\Omega(m)\right| \leq 1$;

3) $\Omega(n,k) \subseteq \mathcal{M}$;

4) $\left|\Omega(n, k)\right| \leq q$;

5) $\Omega(m) = (n, k) \Leftrightarrow m \in \Omega(n, k)$.
\end{definition}
Condition 1) and 2) imply that each IoT device is matched with no more than one SS unit $(n,k)$, while condition 3) and 4) show that each SS unit is matched with $q$ IoT devices for the most. If device $m$ is matched with SS unit $(n, k)$, then SS unit $(n, k)$ is also matched with device $m$, as expressed in condition 5).

\begin{figure}
  \centering
  \includegraphics[width= 3.3in]{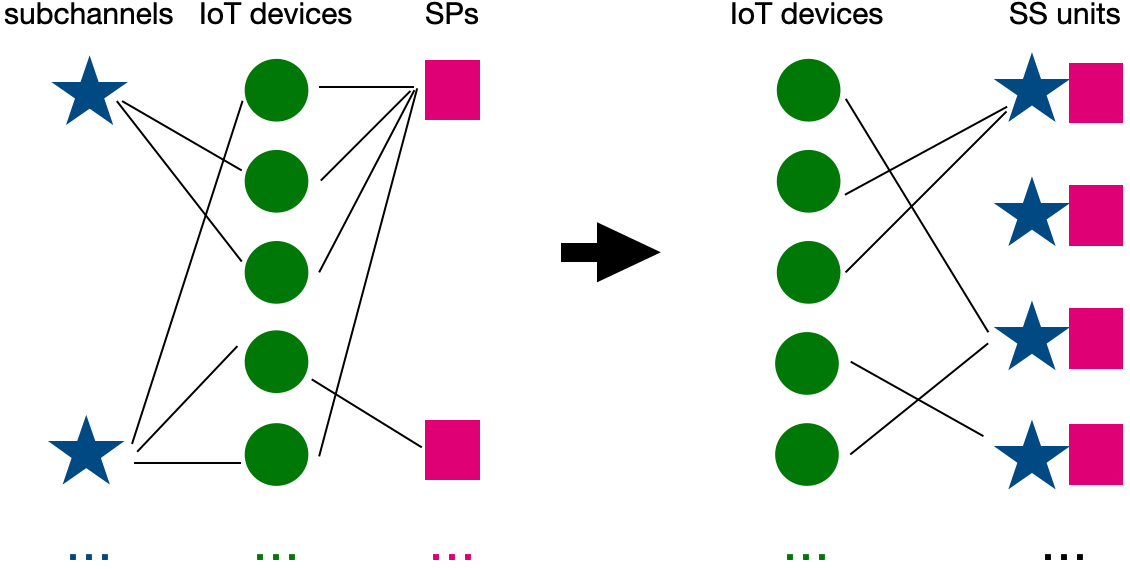}
  \caption{Illustration of converting the three-sided matching to a two-sided matching model.}
  \label{fig:matching_model}
\end{figure}

From Definition \ref{def:three_sided_matching}, we can observe that the formulated three-sided matching model between IoT devices, subchannels and SPs is equivalent to a two-sided many-to-one matching model between IoT devices and SS units, as illustrated in Fig.~\ref{fig:matching_model}. The \textit{social welfare} of the proposed matching model is defined as the network overall EE given in (\ref{eq:sum_ee}). In traditional matching problems, all the players chase for their own large profits. Different from that, our proposed matching game is a centralized one, where the central controller takes both the social welfare and selfish nature of players into consideration. To better describe the competition behavior and decision process of each agent, we assume that $m$, $k$, $n$ have the preference lists over $\mathcal{N} \times \mathcal{K}$, $\mathcal{M}\times \mathcal{N}$, and $\mathcal{M}\times \mathcal{K}$, respectively. The preference list of each agent is actually set by the central controller according to the social welfare, i.e., network EE. For example, the preference of IoT device $m$ over SS units $\mathcal{N} \times \mathcal{K}$ can be presented as:
\begin{equation}
	(n, k) \succ _{m} (n', k') \Leftrightarrow EE_{m, n, k} > EE_{m, n', k'}.
\end{equation}
From (\ref{eq:sinr_mnk}), it is worth noting that the social welfare not only depends on which SS unit that each device matches with, but also on the set of devices matched to the same SS unit, due to the existence of co-channel interference. This interdependency of players' matching status is named as ``peer effects"\cite{bodine2011peer}.
\begin{remark}
	The formulated matching game is lack of the property of substitutability.
\end{remark}
\begin{proof}
	Faced with a set of IoT devices $\mathcal{M}$, a SS unit $(n, k)$ can determine which subset of $\mathcal{M}$ it would most prefer to match with. We call this the choice of SS unit $(n, k)$ among IoT devices, denoted by $Ch_{(n, k)}(\mathcal{M}) = \mathcal{M}^*$. In other words, for any subset $\mathcal{M}'$ of $\mathcal{M}$, SS unit $(n, k)$'s most preferred subset $\mathcal{M}^*$ satisfies $\mathcal{M}^*\succ_{(n, k)} \mathcal{M}'$. 
	
	A SS unit $(n, k)$'s preferences over sets of IoT devices have the property of \textit{substitutability} if, for any set $\mathcal{M}$ that contains IoT device $m$ and $m'$, if $m$ is in $Ch_{(n, k)}(\mathcal{M})$, then $m$ is in $Ch_{(n, k)}(\mathcal{M}\setminus \{m'\})$. In the formulated matching game model, due to the existence of co-channel interference, the achievable EE over SS unit $(n, k)$ with IoT device $m$ may change after $m'$ is unmatched with $(n, k)$, and thus, $m$ may not be in $\mathcal{M}^*$ anymore. This concludes that the formulated matching game does not have the property of substitutability.  
\end{proof}

\begin{remark}
The formulated matching model has peer effects. That is, the IoT devices care not only where they are matched, but also which other devices are matched to the same place.
\end{remark}

Due to the existence of peer effects, the preference lists of IoT devices change with the matching game proceeds, which makes the matching process different from the ones with fixed preference lists \cite{7115904, 6848847}. To better describe the matching process of players with the existence of peer effect, we first introduce the concept of \textit{swap matching} as follows:

\begin{definition}
Given a matching $\Omega$ with $\Omega(m) = (n, k)$ and $\Omega(m') = (n', k')$, the \emph{swap matching} between IoT device $m$ and $m'$ is given by
\begin{align}
\Omega_{m}^{m'} = & \{\Omega\backslash \{(m, (n, k)), (m', (n', k'))\}\} \cup \nonumber\\
& \{(m, (n', k')), (m', (n, k))\} \nonumber.
\end{align}
\end{definition}
Specifically, a swap matching enables device $m$ and $m'$ switch places while keeping other devices and SS units' matchings unchanged. It is worth noticing that one of the IoT devices involved in the swap can be a “hole” representing an open spot, thus allowing a device moving to available vacancies of SS units. Similarly, one of the SS units involved in the swap can be a “hole” so that unmatched devices can be involved in the matching process.

Based on the concept of swap matching, the \textit{swap-blocking pair} is defined as the following:
\begin{definition}
Given a matching $\Omega$ with $\Omega(m) = (n, k)$ and $\Omega(m') = (n', k')$, $(m, m')$ is a \emph{swap-blocking pair} if and only if

1) $EE(\Omega_{m}^{m'}) > EE(\Omega)$;

2) $\sqrt{(x_m - x_{k'})^2 + (y_m - y_{k'})^2} \leq H \cot \theta^{thr}$;

3) $\sqrt{(x_{m'} - x_{k})^2 + (y_{m'} - y_{k})^2} \leq H \cot \theta^{thr}$.

\end{definition}

The above definition implies that if two IoT devices want to switch between two SS units, the central controller must approve the swap with the given conditions satisfied. Condition~1) implies that the network EE should be improved after the swap operation. Condition 2) and 3) indicate that the IoT devices should be within the UAV's LoS coverage after the swap. 

As stated in \cite{roth1992two}, there is no longer a guarantee that a traditional “pairwise-stability” exists when players care about more than their own matching, and, if a stable matching does exist, it can be computationally difficult to find. The authors in \cite{bodine2011peer} focused on the \textit{exchange-stable} matchings, which is defined as follows:
\begin{definition}
A matching $\Omega$ is \emph{exchange-stable} if there does not exist a swap-blocking pair.
\end{definition}

The exchange stability is a distinct notion of stability compared to the traditional notion of stability of \cite{roth1992two}, but one that is relevant to our situation where agents can compare notes with each other.

%
%
%
\subsection{Proposed Joint Device Scheduling and Subchannel Allocation Algorithm}
To find a exchange-stable matching between IoT devices and SS units, we propose two joint device scheduling and subchannel allocation algorithms, i.e., JDSSA-1 and JDSSA-2, based on multiple swap operations. For both JDSSA-1 and JDSSA-2, the algorithm consists of two main phases: the initialization phase and swap-matching phase. The initialization phase is to get a initial matching status between devices and SS units. One easy option is to randomly match devices and SS units, which however will lead to more iterations in the swap-matching phase and reduce the algorithm efficiency. The better approach is to adopt traditional deferred acceptance (DA) algorithm\cite{12} which relies on the preferences of players on the other side of the players. Since the social welfare is defined as the network EE and the power allocation is assumed to be fixed at initialization phase, it is straightforward to give the preference values of IoT devices on SS units as the achievable channel power gain. However, different from traditional matchings, the IoT devices' preferences are not \textit{strict}, which means that one IoT device may be indifferent between two SS units. This is caused by the fact that the channels from IoT devices to the UAV are dominated by LoS link and the same on different subchannels, as shown in (\ref{eq:a2g_channel}). In this case, traditional DA algorithm is not suitable to obtain the initial matching in our model. 

To well handle the non-strict preferences, we propose the Initialization Algorithm (IA) which is inspired by the traditional DA algorithm but with some changes to fit our scenario. The basic idea is to set a preference list for each IoT device over the SPs. Each device proposes to the SP with highest channel power gain and then occupies a subchannel in a round robin way to guarantee even spread of spectrum resources (lines 9-19). If the number of matched IoT devices with the SS unit is smaller than quota, the proposal is accepted (lines 10-12); otherwise, the proposal is rejected. If the proposal of IoT device $m$ is rejected by SS unit $(n, k)$, and all subchannels over SP $k$ get saturated with quota $q$, $k$ is removed from the preference list of $m$ (lines 18-19). This process continues until all the IoT devices are matched or the preference lists of all the unmatched IoT devices get empty.
The details of IA can be found in \textbf{Algorithm 1}.

\begin{algorithm}[p]
\caption{Initialization Algorithm (IA)}
\label{alg:IA}
\LinesNumbered
\KwIn{Set of IoT devices $\mathcal{M}$, set of subchannels $\mathcal{N}$, set of SPs $\mathcal{K}$, set of unmatched IoT devices $\mathcal{M}_{unmatched} = \mathcal{M}$}
\KwOut{Initial matching $\Omega$}
Set preference lists of IoT devices over SPs which satisfy the LoS range constraint, i.e., $\mathcal{P} _m, \forall m$, in descending order with respect to channel power gain according to (\ref{eq:a2g_channel})\;
Set a pointer as the indicator pointing at the first subchannel in $\mathcal{N}$ over each SP, i.e., $pointer_k = 1, \forall k$\;
Set the number of currently associated IoT devices of each SS unit to 0, i.e., $c_{n, k} = 0$, $\forall n, k$\;
\While{$\mathcal{M}_{unmatched} \neq \emptyset$}{
\For{$ m \in \mathcal{M}_{unmatched}$}{
		\If{$\mathcal{P}_m = \emptyset$}{
			Remove $m$ from $\mathcal{M}_{unmatched}$\;
			\textbf{continue}\;
		}
		IoT device $m$ proposes to its currently most preferred SP $k$\;
		\If{$c_{pointer_k, k} < q$}{
			Match device $m$ with SS unit $(pointer_k, k)$\;
			Remove $m$ from $\mathcal{M}_{unmatched}$\;	
		}
		\If{$pointer_k < N$}{
			$pointer_k ++$\;
		}
		\Else{
			\If{$c_{pointer_k, k} < q$}{
				$pointer_k = 1$\;
			}
			\Else{
				Remove $k$ from $\mathcal{P}_m$\;
			}
		}
}
}
\textbf{return} $\Omega$
\end{algorithm}

\begin{algorithm}
\caption{Joint Device Scheduling and Subchannel Allocation Algorithm (JDSSA-1)}
\LinesNumbered
\KwIn{Set of IoT devices $\mathcal{M}$, set of SS units $\mathcal{N} \times \mathcal{K}$}
\KwOut{Final matching $\Omega ^*$}
----------\textbf{Initialization}\;
Obtain the initial matching $\Omega$ with \textbf{IA}\;
----------\textbf{Swap matching process}\;
\Repeat{there exists no more swap-blocking pair}{
	\For{$m \in \mathcal{M}$}{
		\For{$m' \in \mathcal{M} \setminus \{m\}$}{
			\If{$(m, m')$ forms a blocking pair}{
				$\Omega = \Omega _m^{m'}$\;
			}
		}
	}
}
\textbf{return} $\Omega ^*$
\label{alg:JDSSA-1}
\end{algorithm}

After the initialization of matching state based on IA, swap operations among IoT devices are enabled to further improve the performance of the joint device scheduling and subchannel allocation algorithm (JDSSA-1). The details of JDSSA-1 are shown in \textbf{Algorithm 2}, where the central controller keeps searching for swap-blocking pairs to execute the swap operations (lines 4-9). The swap matching process continues until there is no more swap-blocking pairs, which reaches to the end of the algorithm (lines 9-10).
\subsection{Advanced Matching Algorithm with Exploration}
Note that the swap operations among IoT devices are based on the initial matching status, which means the final matching is highly affected by the initial matching. Since the IoT devices can swap only between their current matching states, a better matching state that can achieve higher network EE may not be formed directly based on the current matching state. This can be explained more explicitly by the following example. If the current matching state is $\{\{m, n, k\}, \{m', n', k'\}, \{m'', n'', k''\}\}$ and the optimal matching\footnote[1]{The optimal matching here is defined as the matching that can achieve the highest network EE.} is $\{\{m, n', k'\}, \{m', n'', k''\}, \{m'', n, k\}\}$, the optimal matching can not be reached if $(m, m')$ is not a swap-blocking pair under the current matching state. 

To solve this issue and further improve the performance of JDSSA-1, we introduce the concept of \textit{exploration} \cite{arnold2002dynamic} to explore the space of matching states. The basic idea is to enable players to destabilize a state involving a dominated allocation, at the cost of a temporary loss in utility. We propose the algorithm JDSSA-2, as shown in \textbf{Algorithm 3}, to involve exploration during swap operations among IoT devices. In JDSSA-2, the initialization phase is the same as that in JDSSA-1. During the swap matching phase, if a pair of IoT devices $(m, m')$ forms a swap-blocking pair, the swap operation between $m$ and $m'$ happens with probability $1$ (lines 7-8). Otherwise, the swap operation between $m$ and $m'$ happens with probability $\epsilon$ through exploration (lines 10-12). Note that $0 < \epsilon \ll 1$ is a small number that corresponds to the probability that a player makes an irrational decision.

\begin{algorithm}
\caption{Joint Device Scheduling and Subchannel Allocation Algorithm (JDSSA-2)}
\LinesNumbered
\KwIn{Set of IoT devices $\mathcal{M}$, set of SS units $\mathcal{N} \times \mathcal{K}$, maximum number of iterations $t_{max}$}
\KwOut{Final matching $\Omega ^*$}
----------\textbf{Initialization}\;
Obtain the initial matching $\Omega$ with \textbf{IA}\;
----------\textbf{Swap matching process with exploration}\;
\While{$t \leq t_{max}$}{
	\For{$m \in \mathcal{M}$}{
		\For{$m' \in \mathcal{M}\setminus \{m\}$}{
			\If{$(m, m')$ forms a blocking pair}{
				$\Omega = \Omega _m^{m'}$\;
			}
			\Else{
				Generate a random number $rand$ between $0$ and $1$\;
				\If{$rand < \epsilon$}{
					$\Omega = \Omega _{m}^{m'}$
				}
			}
		}
	}
	$t ++$\;
}
\textbf{return} $\Omega ^*$
\label{alg:JDSSA-2}
\end{algorithm}

\subsection{Property Analysis}
Given the proposed algorithms above, we present some important remarks on the properties in terms of stability, convergence, complexity and optimality.
\subsubsection{Stability}
To better capture the characteristics of the formulated matching problem, we introduce the exchange stability as stated in \textbf{Definition 5}.
\begin{lemma}
	The final matching $\Omega^*$ of JDSSA-1 is an exchange-stable matching between IoT devices and SS units.
\end{lemma}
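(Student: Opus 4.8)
The plan is to argue directly from the termination condition of \textbf{Algorithm 2} (JDSSA-1) together with \textbf{Definition 6} of exchange stability. First I would observe that the swap-matching phase of JDSSA-1 terminates only when the \texttt{Repeat}-loop exit condition ``there exists no more swap-blocking pair'' is met; this is an explicit stopping criterion built into lines~4--10 of the algorithm. Hence, whatever matching $\Omega^*$ is returned, it has the property that no pair $(m,m')$ of IoT devices satisfies all three conditions in \textbf{Definition 4} (EE strictly increases under the swap, and both devices remain within the UAV's LoS coverage after the swap). By \textbf{Definition 6}, a matching with no swap-blocking pair is precisely an exchange-stable matching. So the core of the argument is just this identification of the termination condition with the defining property of exchange stability.

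The one genuine gap to close is that the above reasoning only shows $\Omega^*$ is exchange-stable \emph{if} the algorithm actually terminates --- an infinite run would never ``return'' anything. So the second step is to argue finiteness of the swap-matching phase. Here I would invoke the fact that the social welfare $EE$, defined in~(\ref{eq:sum_ee}), acts as a potential function: every executed swap in JDSSA-1 is a swap over a swap-blocking pair, and by condition~1) of \textbf{Definition 4} each such swap strictly increases $EE(\Omega)$. Since the set of feasible matchings between $\mathcal{M}$ and the SS units $\mathcal{N}\times\mathcal{K}$ is finite (each device has at most one SS unit, each SS unit has quota $q$, and $M, N, K$ are finite), the sequence of matchings produced is a strictly increasing sequence of $EE$-values drawn from a finite set, hence must be finite. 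Therefore the \texttt{Repeat}-loop exits after finitely many iterations, at which point no swap-blocking pair remains, which is the exchange-stability conclusion. (One should also note the LoS feasibility constraints~(\ref{eq:los_constraint_pf}) are preserved throughout: the initialization via IA respects them, and conditions~2)--3) of \textbf{Definition 4} guarantee each swap keeps both devices LoS-feasible, so $\Omega^*$ is a legitimate feasible matching.)

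I do not expect a hard obstacle here --- the result is essentially definitional once finiteness is secured --- but the step that deserves the most care is precisely the finiteness/convergence argument, since it is the only place where something could in principle go wrong (e.g.\ if swaps could cycle without increasing $EE$). The clean way to present it is as a monotone-potential argument: $EE$ is bounded above (the number of matchings is finite, or alternatively each $R_{n,k}/P_{n,k}$ is bounded), strictly increasing along the executed swaps, hence the process is finite. This also foreshadows the convergence result that presumably follows in the paper, so I would state the finiteness claim carefully and reuse it. If one wanted to be fully rigorous about ``no swap-blocking pair remains at termination,'' one notes that the inner double \texttt{for}-loop (lines 5--9) examines every ordered pair, so the loop can only be exited having found none; combined with the outer \texttt{Repeat} this yields the claim.
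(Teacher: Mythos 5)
Your argument is correct and takes essentially the same route as the paper: the paper's proof of this lemma is precisely the identification of the termination condition of JDSSA-1 with the absence of swap-blocking pairs, phrased as a short contradiction. The finiteness/monotone-potential argument you add is exactly what the paper defers to its separate convergence result (Theorem 1), so you have merely folded that adjacent result into the lemma's proof rather than doing anything genuinely different.
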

\begin{proof}
	Assume that there exists a swap-blocking pair $(m, m')$ in the final matching $\Omega^*$ satisfying that $EE({\Omega^*}_{m}^{m'}) > EE(\Omega^*)$. According to JDSSA-1, the algorithm does not terminate until all the swap-blocking pairs are eliminated. In other words, $\Omega^*$ is not the final matching, which causes conflict. Therefore, there does not exist a swap-blocking pair in the final matching, and thus we can conclude that the proposed algorithm reaches to the two-sided exchange stability in the end of the algorithm.
\end{proof}

\subsubsection{Convergence}
We now prove the convergence of JDSSA-1 while  the convergence of JDSSA-2 is usually not considered as it is constrained by the maximum number of iterations $t_{max}$.
\begin{theorem}
	JDSSA-1 converges to an exchange-stable matching $\Omega^*$ within limited number of iterations.
\end{theorem}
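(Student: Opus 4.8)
The plan is to treat the network EE as a monotone potential function and to exploit the finiteness of the matching space. The crucial structural fact is that JDSSA-1 performs a swap $\Omega \leftarrow \Omega_m^{m'}$ only when $(m,m')$ is a swap-blocking pair, and by condition~1) of \textbf{Definition~4} this forces $EE(\Omega_m^{m'}) > EE(\Omega)$. Hence the sequence of matchings $\Omega^{(0)}, \Omega^{(1)}, \Omega^{(2)}, \dots$ produced by the algorithm — with $\Omega^{(0)}$ the output of \textbf{IA} and each later term obtained from its predecessor by one approved swap — satisfies $EE(\Omega^{(0)}) < EE(\Omega^{(1)}) < EE(\Omega^{(2)}) < \cdots$, i.e., it is strictly increasing in $\mathbb{R}$.

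First I would establish that the set $\mathcal{F}$ of feasible matchings is finite: a matching is fully described by the assignment $\Omega(\cdot):\mathcal{M}\to (\mathcal{N}\times\mathcal{K})\cup\{\emptyset\}$ subject to the quota constraint (\ref{eq:quota_constraint}) and the LoS constraint (\ref{eq:los_constraint_pf}), so $|\mathcal{F}|\le (NK+1)^{M}<\infty$. Next I would note that on $\mathcal{F}$ the quantity $EE$ defined in (\ref{eq:sum_ee}) is a well-defined real number: the power allocation is held fixed with each $p_{m,n,k}$ bounded by (\ref{eq:c}), and every denominator $P_{n,k}$ in (\ref{eq:sum_ee}) is at least $P_0>0$ by (\ref{eq:total_power}), so no ratio blows up; consequently $EE(\mathcal{F})$ is a finite subset of $\mathbb{R}$. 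Combining finiteness with strict monotonicity then rules out cycling: if the algorithm ever revisited a matching, say $\Omega^{(i)}=\Omega^{(j)}$ with $i<j$, then $EE(\Omega^{(i)})=EE(\Omega^{(j)})$, contradicting $EE(\Omega^{(i)})<EE(\Omega^{(j)})$. Therefore the $\Omega^{(i)}$ are pairwise distinct, the number of executed swaps is at most $|\mathcal{F}|\le (NK+1)^{M}$, and — since every non-terminating pass of the outer loop performs at least one swap — the number of iterations of JDSSA-1 is finite and bounded by $(NK+1)^{M}$.

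Finally, JDSSA-1 halts only when a complete pass of its \textbf{for} loops finds no swap-blocking pair, so the returned matching $\Omega^*$ admits no swap-blocking pair; by \textbf{Definition~5} this is exactly the statement that $\Omega^*$ is exchange-stable, in agreement with \textbf{Lemma~1}. Hence JDSSA-1 converges, within a finite number of iterations, to an exchange-stable matching. The main obstacle is not the counting but the two feasibility/well-definedness checks that license the potential argument: (i) that every swap performed keeps the matching feasible — guaranteed because conditions~2) and 3) of \textbf{Definition~4} re-impose the LoS constraint on the swapped devices while the quota (\ref{eq:quota_constraint}) and the one-subchannel/one-SP constraints (\ref{eq:f})--(\ref{eq:g}) are automatically preserved, since $\Omega_m^{m'}$ merely permutes the assignments of two devices; and (ii) that $EE$ never equals $+\infty$, which is secured by the static term $P_0$ in (\ref{eq:total_power}). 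A sharper iteration bound would follow from lower-bounding the per-swap EE increment, but that refinement is not needed for convergence.
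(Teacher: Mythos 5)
Your proof follows the same potential-function strategy as the paper: every approved swap is a swap-blocking pair, so by condition~1) of Definition~4 each swap strictly increases the network EE, and this monotonicity drives convergence. The one substantive difference is how the argument is closed. The paper infers that the number of swaps is finite from the fact that EE ``has an upper bound due to the limited spectrum and power resources''; strictly speaking, a bounded, strictly increasing real sequence can still be infinite (the increments could shrink to zero), so boundedness alone does not finish the proof. You instead invoke the finiteness of the matching space, $|\mathcal{F}|\le (NK+1)^{M}$, so that EE ranges over a finite set of values and a strictly increasing sequence cannot revisit a state and must terminate after at most $|\mathcal{F}|$ swaps. That is the logically complete version of the argument, and your explicit checks that swaps preserve feasibility and that the final matching admits no swap-blocking pair (hence is exchange-stable by Definition~5 and Lemma~1) match the paper's conclusion. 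One small caveat: your claim that every denominator $P_{n,k}$ is at least $P_0$ holds only for SS units with at least one matched device; for empty units the term in (\ref{eq:sum_ee}) is $0/0$ and must be taken as zero by convention, a detail the paper itself glosses over and which does not affect the convergence argument.
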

\begin{proof}
As shown in \textbf{Algorithm 2}, the convergence of JDSSA-1 depends mainly on the \textit{swap matching process}. According to \textbf{Definition 4}, after each swap operation between IoT devices $m$ and $m'$ along with their corresponding matched SS units $(n, k)$, $(n', k')$, the social welfare satisfies: $EE({\Omega^*}_{m}^{m'}) > EE(\Omega^*)$. Note that the network EE has an upper bound due to the limited spectrum and power resources. Therefore, the number of swap operations in JDSSA-1 is limited, as such, there exists a swap operation after which no swap-blocking pair can further improve the network EE. JDSSA-1 then converges to the final matching $\Omega^*$ which is stable as proved in \textbf{Lemma 1}.	
\end{proof}

\subsubsection{Complexity}
Given the convergence of JDSSA-1, we can now compute the complexity. The computational complexity of JDSSA-1 is composed of two main parts, i.e., the IA and the swap matching process. For the IA, the complexity for IoT devices proposing to their most preferred SPs is given by $\mathcal{O}(MK)$. For the swap matching process, the complexity depends on the number of iterations (line 4) and number of swap operations (lines 5-9) within each iteration. However, the number of iterations cannot be given in a closed form. This is because it is uncertain that at which step the algorithm converges to an exchange stable matching, which is a common problem in most heuristic algorithms. We will analyze the number of total iterations for different numbers of IoT devices and subchannels in Fig.~\ref{fig:swapCdf} and Fig.~\ref{fig:num_swap_operation}, and more detailed analysis can be found in Section VII. Here, we give an approximation of the complexity with number of iterations $T$.
\begin{theorem}
With the number of iterations $T$, the computational complexity of JDSSA-1 is given by $\mathcal{O}(TM^2)$.	
\end{theorem}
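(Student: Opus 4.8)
The plan is to decompose the running time of JDSSA-1 into its two constituent phases and bound each separately, then combine. First I would recall from the complexity paragraph that the initialization phase (the IA) costs $\mathcal{O}(MK)$: each of the $M$ IoT devices proposes down its preference list of at most $K$ SPs, and each proposal entails a constant-time check of the round-robin pointer and the quota counter $c_{n,k}$. Since we are free to assume this bound from the preceding discussion, the IA contributes $\mathcal{O}(MK)$.

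Next I would bound the swap-matching phase. By Theorem~2 this phase terminates after a finite number of iterations; denote that number by $T$. Within each single iteration, the algorithm runs the nested loop over $m \in \mathcal{M}$ and $m' \in \mathcal{M}\setminus\{m\}$, which examines $M(M-1) = \mathcal{O}(M^2)$ ordered pairs. For each pair $(m,m')$ the controller must decide whether it forms a swap-blocking pair: this requires checking the two LoS-range inequalities in conditions~2) and~3) of the swap-blocking definition (each an $\mathcal{O}(1)$ geometric test) and evaluating $EE(\Omega_m^{m'}) > EE(\Omega)$. The EE comparison only involves the at most two SS units $(n,k)$ and $(n',k')$ affected by the swap, each carrying at most $q$ devices, so re-computing the relevant per-unit rate-over-power terms in (\ref{eq:sum_ee}) costs $\mathcal{O}(q)$, i.e.\ a constant since $q$ is a fixed system parameter. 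Hence each pair is processed in $\mathcal{O}(1)$ time and one full iteration costs $\mathcal{O}(M^2)$; the whole swap-matching phase costs $\mathcal{O}(TM^2)$.

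Finally I would combine the two: the total complexity is $\mathcal{O}(MK) + \mathcal{O}(TM^2)$. Since $M > NK \geq K$ and $T \geq 1$, the term $\mathcal{O}(TM^2)$ dominates $\mathcal{O}(MK)$, giving the claimed $\mathcal{O}(TM^2)$.

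I do not expect a serious obstacle here, since this is essentially a bookkeeping argument; the one point that needs care is justifying that a single swap-blocking test is $\mathcal{O}(1)$ rather than $\mathcal{O}(NK)$ — that is, arguing that recomputing $EE$ after a swap is \emph{local} (only the two touched SS units change) and that $q$ is treated as a constant. If instead one recomputed the full double sum in (\ref{eq:sum_ee}) naively, each test would cost $\mathcal{O}(NK)$ and the bound would inflate to $\mathcal{O}(TM^2NK)$; so the statement implicitly relies on the incremental-evaluation observation, which I would make explicit in the proof.
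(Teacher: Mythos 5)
Your proposal is correct and follows essentially the same route as the paper: the paper likewise counts at most $M^2$ candidate swap pairs per iteration and multiplies by the iteration count $T$ to obtain $\mathcal{O}(TM^2)$. Your additional observations --- that the IA's $\mathcal{O}(MK)$ cost is dominated and that each swap-blocking test must be treated as $\mathcal{O}(1)$ via local re-evaluation of only the two affected SS units --- are refinements the paper leaves implicit, but they do not change the argument.
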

\begin{proof}
	Within each iteration, the central controller searches for swap-blocking pairs and the IoT devices execute all approved swap operations over their currently matched SS units. Since there are $M$ IoT devices, at most $M^2$ swap operations need to be considered in each iteration. Therefore, given the number of iterations $T$, the computational complexity of JDSSA-1 can be presented by $\mathcal{O}(TM^2)$.
\end{proof}
The complexity of JDSSA-2 is dependent on the maximum number of iterations $t_{max}$, while the complexity of exhaustive search increases exponentially with $M$, $N$ and $K$.
\subsubsection{Optimality}
We show below the insights on the relationships between optimal and stable solutions of JDSSA-1 and JDSSA-2.
\begin{proposition}
	In JDSSA-1, all local maxima of the network EE corresponds to an exchange stable matching state, and vice versa.
\end{proposition}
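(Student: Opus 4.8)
The plan is to first make precise the notion of a \emph{local maximum}, which the statement leaves implicit, by equipping the set of feasible matchings with the neighbourhood structure induced by single swap operations (\textbf{Definition 3}); once that is fixed, both implications follow almost immediately from the definition of a swap-blocking pair (\textbf{Definition 4}) and of exchange stability (\textbf{Definition 5}). Concretely, I would declare two feasible matchings $\Omega$ and $\Omega'$ to be \emph{neighbours} if $\Omega' = \Omega_m^{m'}$ for some pair of IoT devices $m, m'$, where either entry is allowed to be a ``hole'' representing an open spot of an SS unit or an unmatched device, and then call $\Omega$ a local maximum of the network EE if $EE(\Omega') \le EE(\Omega)$ for every \emph{feasible} neighbour $\Omega'$ of $\Omega$, i.e. every neighbour that also respects the LoS constraint~(\ref{eq:los_constraint_pf}).

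For the direction ``exchange-stable $\Rightarrow$ local maximum'' I would argue by contrapositive at the level of neighbours: let $\Omega$ be exchange-stable and let $\Omega' = \Omega_m^{m'}$ be any feasible neighbour. Feasibility of $\Omega'$ is exactly conditions 2) and 3) of \textbf{Definition 4} holding for the pair $(m, m')$. If $EE(\Omega') > EE(\Omega)$ were true, then condition 1) would hold as well, so $(m,m')$ would be a swap-blocking pair, contradicting exchange stability. Hence $EE(\Omega') \le EE(\Omega)$ for every feasible neighbour, and $\Omega$ is a local maximum.

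For the converse ``local maximum $\Rightarrow$ exchange-stable'' I would again argue by contradiction: suppose $\Omega$ is a local maximum but a swap-blocking pair $(m, m')$ exists. By conditions 2) and 3) of \textbf{Definition 4}, the swap matching $\Omega_m^{m'}$ respects the LoS constraint, so it is a feasible neighbour of $\Omega$; by condition 1), $EE(\Omega_m^{m'}) > EE(\Omega)$, which contradicts the local maximality of $\Omega$. Therefore no swap-blocking pair can exist, and $\Omega$ is exchange-stable. Combining the two directions gives the claimed equivalence, which also clarifies the algorithmic reading: the swap matching phase of JDSSA-1 is an ascent over this neighbourhood graph and terminates exactly at a state that is simultaneously a local EE maximum and an exchange-stable matching.

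The main obstacle is not a deep computation but getting the bookkeeping of the neighbourhood right, and I would spell out three points. First, ``neighbour'' must be taken with respect to the single-swap operation of \textbf{Definition 3}, \emph{including} the hole cases, so that moving a device into a vacant slot of an SS unit and bringing a currently unmatched device into the matching are both covered (when one party is a hole, the corresponding LoS condition in \textbf{Definition 4} is vacuous). Second, the local-maximum comparison must be restricted to feasible neighbours, so that conditions 2)--3) of \textbf{Definition 4} coincide with feasibility and introduce no extra gap between the two notions. Third, a swap preserves the quota constraint~(\ref{eq:quota_constraint}) and the one-subchannel/one-SP constraints~(\ref{eq:f})--(\ref{eq:g}) automatically, since it only permutes two devices between two slots, so the LoS constraint is the only feasibility condition a swap can violate. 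With these conventions in place, both implications reduce to one-line contrapositives of \textbf{Definitions 4 and 5}.
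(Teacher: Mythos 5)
Your proof is correct and follows essentially the same route as the paper's: both directions are one-line contradictions using the definitions of swap-blocking pair and exchange stability. The only difference is that you make explicit the swap-induced neighbourhood structure and the feasibility bookkeeping (holes, LoS constraints) that the paper's proof leaves implicit, which tightens the argument without changing it.
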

\begin{proof}
	Assume that the achievable EE under matching state $\Omega$ is a local maximal value. If $\Omega$ is not exchange stable, it indicates that there exists a swap-blocking pair that can further improve the EE. However, this is inconsistent with the assumption that the EE under matching state $\Omega$ is local optimal, and hence we conclude that $\Omega$ is exchange stable. Similarly, assume that a matching state $\Omega'$ is exchange stable, which implies that the EE can not be further improved under current matching state. This concludes the local optimality of the EE under $\Omega'$.
\end{proof}
\begin{proposition}
	With sufficiently large $t_{max}$, the final matching of JDSSA-2 is exchange stable and global optimal.
\end{proposition}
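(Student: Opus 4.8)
The plan is to treat the swap-with-exploration dynamics of JDSSA-2 as a Markov chain on the finite set $\mathcal{S}$ of all feasible matchings between IoT devices and SS units (those satisfying conditions 1)--5) of Definition \ref{def:three_sided_matching} together with the LoS constraints). First I would observe that, within each iteration, the accept/reject decision for each ordered pair $(m,m')$ depends only on the current matching and on independent coin flips of bias $\epsilon$; hence the sequence $\{\Omega_t\}$ generated by \textbf{Algorithm 3} is a time-homogeneous Markov chain with a transition kernel $P_\epsilon$ that does not depend on $t$. Since the EE is bounded and $\mathcal{S}$ is finite, a global maximizer $\Omega^{opt}$ exists, and by \textbf{Proposition 1} it is exchange stable, so the ``exchange stable'' half of the claim is immediate once the ``global optimal'' half is established.

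The key step is \emph{irreducibility}. For any two feasible matchings $\Omega,\Omega'\in\mathcal{S}$, one can pass from $\Omega$ to $\Omega'$ by a finite sequence of single swap operations $\Omega\mapsto\Omega_m^{m'}$ (allowing a device or an SS slot to be a ``hole'' as in \textbf{Definition 4}), each keeping every intermediate matching feasible -- this connectivity of the feasible swap graph is where a small assumption enters, and it is naturally satisfied under the LoS-covering deployment of Section VI where devices lie within LoS of several SPs. Along such a path every individual swap is realized with probability at least $\epsilon>0$ (probability $1$ if it happens to be swap-blocking, and $\epsilon$ through exploration otherwise), so the path has strictly positive probability and $P_\epsilon^{(r)}(\Omega,\Omega')>0$ for some $r$. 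A finite irreducible chain is positive recurrent, so it visits every state, in particular $\Omega^{opt}$, infinitely often with probability $1$; consequently, for any $\delta>0$ there is a $t_{max}$ large enough that $\Omega^{opt}$ is reached before iteration $t_{max}$ with probability at least $1-\delta$. Recording the best matching encountered then returns a global optimum, which is also exchange stable, yielding both conclusions.

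The main obstacle -- and the point I would be most careful about -- is the gap between ``$\Omega^{opt}$ is \emph{visited} by time $t_{max}$'' and ``the matching \emph{returned} at time $t_{max}$ equals $\Omega^{opt}$''. Because $\epsilon$ is held fixed and positive, the chain keeps making irrational moves even out of $\Omega^{opt}$ (which, being exchange stable, has no swap-blocking pair to deterministically pull it back), so the time-$t_{max}$ state is genuinely random. Two routes close this gap: (i) interpret (or implement) the ``final matching'' as the incumbent best matching seen so far, after which the recurrence argument above suffices with no further assumptions; or (ii) pass to the joint limit $\epsilon\to 0$, $t_{max}\to\infty$ and invoke a stochastic-stability / Freidlin--Wentzell analysis of the perturbed chain $P_\epsilon$, showing the limiting occupation measure concentrates on the global maximizer -- cleaner to state but requiring one to compare the exploration ``cost'' of escaping each exchange-stable basin, which is the real work and may need extra structure on the EE landscape. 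I would pursue route (i), since it matches the algorithmic description most directly.
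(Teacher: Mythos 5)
Your proposal follows essentially the same route as the paper's own proof: exploration makes every matching state reachable, the state space is finite, so with sufficiently large $t_{max}$ the global optimum is encountered, and a global optimum admits no swap-blocking pair and is therefore exchange stable. The paper's argument is exactly this, stated informally ("the set of explored states expands\dots the optimal matching state can finally be obtained"), so in that sense you have reproduced it. What you add is worth noting: you make explicit two points the paper silently assumes. First, irreducibility of the swap dynamics (that any feasible matching is reachable from any other by feasible single swaps, including swaps with "holes") is needed for the claim that all states get explored; the paper asserts reachability without argument. Second, and more substantively, you correctly identify that with $\epsilon$ held fixed the chain does not stop at the optimum -- it continues to make irrational moves out of it -- so "the optimal state is visited" does not imply "the matching returned at iteration $t_{max}$ is optimal." The paper's proof has this same gap; your fix (returning the incumbent best matching seen so far, or alternatively a stochastic-stability argument with $\epsilon\to 0$) is a genuine repair rather than a deviation. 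So: same approach, but your version is the rigorous one, and the caveats you raise are real weaknesses of the statement as the paper proves it.
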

\begin{proof}
	In JDSSA-2, exploration is enabled to destabilize the dominated states and search for all possible matching states. Assume that the final matching of JDSSA-2 is not global optimal, which means that there exists an optimal matching state unexplored. By increasing the number of iterations, i.e., $t_{max}$, the set of explored states expands. Since the number of combinations between IoT devices and SS units is limited, with sufficiently large $t_{max}$, the optimal matching state can finally be obtained. Under the optimal matching state, there is no approved swap operations to further improve EE. Therefore, the optimal state is also exchange stable. 
\end{proof}

\section{Power Allocation}
\label{sec:power_allocation}
\begin{algorithm}
\caption{Dinkelbach's Algorithm Based Power Allocation (DABPA)}
\LinesNumbered
\KwIn{$\tau = 0$, $\mu = 10^{(-8)}$, $f_{max} = \mu + 0.01$}
\KwOut{$p_{m}, m \in \{1, ..., \left|\mathcal{M}_{n, k}\right|\}$}
\While{$f_{max} > \mu$}{
	$\{p_m ^*\} = \arg \max \log _2\left(1 + \frac{\sum _{m = 1}^{\left|\mathcal{M}_{n,k}\right|}p_mg_m}{\sigma ^2}\right) - \tau \sum _{m = 1}^{\left|\mathcal{M}_{n, k}\right|}\left(p_m + p_0 \right)$, s.t. (\ref{eq:power_allocation_constraint1}), (\ref{eq:power_allocation_constraint2});
	
	$f_{max} = \log _2\left(1 + \frac{\sum _{m = 1}^{\left|\mathcal{M}_{n, k}\right|}p_m^*g_m}{\sigma ^2}\right) - \tau \sum _{m = 1}^{\left|\mathcal{M}_{n, k}\right|}\left(p_m^* + p_0 \right)$;
	
	$\tau = \frac{\log _2\left(1 + \frac{\sum _{m = 1}^{\left|\mathcal{M}_{n, k}\right|}p_m^*g_m}{\sigma ^2}\right)}{\sum _{m = 1}^{\left|\mathcal{M}_{n, k}\right|}\left(p_m^* + p_0 \right)}$;
}
\textbf{return} $\Omega ^*$
\end{algorithm}  

In Section~\ref{sec:subchannel_allocation}, the joint device scheduling and subchannel allocation is performed under the assumption that the power allocation scheme is known. In this section, we discuss how we allocate transmit power to IoT devices under given device scheduling and subchannel allocation stragtegy. For the given device scheduling and subchannel allocation strategy $\{\{a_{m, n}\}, \{b_{m,k}\}\}$, power allocation can be performed independently among SS units as they are with orthogonal time/frequency resources. Therefore, we can consider the EE maximization problem on each SS unit separately. Since the form of power allocation sub-problems on different SS units are the same, for simplicity, we drop the SP index $k$ and subchannel index $n$. As such, $p_{m, n, k}$ and $\gamma_{m, n, k}$ are replaced by $p_{m}$, and $\gamma _{m}$, respectively. Accordingly, the sub-problem on SS unit $(n, k)$ is expressed as
\begin{subequations}
\label{eq:power_allocation_subproblem}
\begin{equation}
	\label{eq:power_allocation_objective_function}
	\max _{\{p_{m}\}} \frac{\sum _{m = 1} ^{ \left|\mathcal{M}_{n, k}\right|}\log _2(1 + \gamma _{m})}{\sum _{m = 1} ^{\left|\mathcal{M}_{n, k}\right|}(p_{m} + P_0)},
\end{equation} 
\begin{equation}
\label{eq:power_allocation_constraint1}
s.t. \ \ \ p_m \leq P_{max}, \forall m \in \{1, ..., \left|\mathcal{M}_{n, k}\right|\},
\end{equation}
\begin{equation}
	p_m \geq P_{min}, \forall m \in \{1, ..., \left|\mathcal{M}_{n, k}\right|\}.
\label{eq:power_allocation_constraint2}
\end{equation}
\end{subequations}

The numerator in (\ref{eq:power_allocation_objective_function}) can be rewritten as
\begin{align}
\label{eq: numerator_reform}	
	& \sum\nolimits _{m = 1} ^{\left|\mathcal{M}_{n, k}\right|} \log _2 \left(1 + \gamma _{m}\right) = \sum\nolimits _{m = 1} ^{\left|\mathcal{M}_{n, k}\right|} \log _2 \left(1 + \frac{p_{m}g_{m}}{\sum _{i = m + 1}^{\left|\mathcal{M}_{n, k}\right|}p_{i}g_{i} + \sigma ^2} \right)\nonumber \\
	& = \log _2\left[\frac{\sum _{i = 1}^{\left|\mathcal{M}_{n, k}\right|} p_i g_i + \sigma^2}{\sum _{i = 2}^{\left|\mathcal{M}_{n, k}\right|} p_i g_i + \sigma^2}\cdot \frac{\sum _{i = 2}^{\left|\mathcal{M}_{n, k}\right|} p_i g_i + \sigma^2}{\sum _{i = 3}^{\left|\mathcal{M}_{n, k}\right|} p_i g_i + \sigma^2} \cdots \frac{\sum _{i = M-1}^{\left|\mathcal{M}_{n, k}\right|} p_i g_i + \sigma^2}{p_M g_M + \sigma^2} \cdot \frac{p_Mg_M + \sigma^2}{\sigma^2} \right] \nonumber \\
	& =\log _2 \left( 1 + \frac{\sum _{m = 1}^{\left|\mathcal{M}_{n, k}\right|}p_{m}g_{m}}{\sigma ^2} \right).
\end{align}
Therefore, problem (\ref{eq:power_allocation_subproblem}) can be rewritten as
\begin{subequations}
\label{eq:power_allocation_reform}
	\begin{equation}
		\max_{\{p_m\}}\frac{\log _2\left(1 + \frac{\sum _{m = 1}^{\left|\mathcal{M}_{n, k}\right|}p_mg_m}{\sigma ^2}\right)}{\sum\nolimits _{m = 1}^{\left|\mathcal{M}_{n, k}\right|}\left(p_m + p_0 \right)},
		\label{eq:power_allocation_objective_reform}
	\end{equation}
	\begin{equation}
		s.t. \ \ \ (\ref{eq:power_allocation_constraint1}), (\ref{eq:power_allocation_constraint2}). 
	\end{equation}
\end{subequations}
Define a function $f(\tau)$ as
\begin{align}
	f(\tau, \{p_m\}) = 
		& \log _2\left(1 + \frac{\sum _{m = 1}^{\left|\mathcal{M}_{n, k}\right|}p_mg_m}{\sigma ^2}\right) - \tau \sum _{m = 1}^{\left|\mathcal{M}_{n, k}\right|}\left(p_m + p_0 \right) \label{eq:eta_function}.
\end{align}
It is clear to see that (\ref{eq:power_allocation_objective_reform}) is in fractional form, and $f(\tau, \{p_m\})$ is strictly concave with respect to $\{p_m\}$ with given $\tau$. Moreover, constraint (\ref{eq:power_allocation_constraint1}) and (\ref{eq:power_allocation_constraint2}) are affine. Therefore, (\ref{eq:power_allocation_reform}) can be optimally solved by the Dinkelbach's algorithm, as shown in \textbf{Algorithm 4}. In the algorithm, the value of $\tau$ is initialized as 0. Given the $\tau$ value, we have a set of $\{f(\tau, \{p_m\})\}$ with different $\{p_m\}$ value. We find $\{p_m^*\}$ that maximizes $\{f(\tau, \{p_m\})\}$ (line 2). Record the maximum $\{f(\tau, \{p_m\})\}$ under the current $\tau$ value as $f_{max}$ (line 3). Set $\tau$ as the value satisfying $f(\tau, \{p_m*\}) = 0$ (line 4). This procedure iterates until $f_{max}$ is smaller or equal to the threshold $\mu$. 
 
%
\section{UAV's Stop Points Deployment}
Note that the UAV does not continuously move as it must stop, serve the devices, and then update its locations. Considering the limited onboard battery, the UAV is encouraged to spend a minimum total energy on mobility so as to remain operational for a longer time. Therefore, our goal is to minimize the number of SPs and ensure that each IoT device is covered by the LoS connection with UAV at one of the SPs for the sake of better channel gain. In this section, we propose a simple but effective sub-optimal approach to deploy the UAV's SPs.

Given the UAV's height H, the radius of UAV's LoS coverage range is expressed as
\begin{equation}
	R_{LoS} = H \times \cot \theta^{thr}. 
\end{equation}
Given the radius of LoS coverage range of the UAV and that of the target area, the objective is to find the minimum number of SPs and their optimal positions to cover the target area, which can be solved by exploiting the \textit{disk covering problem}~\cite{3}. In the disk covering problem, given the radius of a disk, i.e., $R$, and the number of congruent smaller disks, i.e., $K$, the minimum required radii of the small disks, i.e., $R_{min}$, to cover the target disk can be calculated. In Fig.~\ref{fig:disk-covering}, we illustrate the covering solutions for a target disk with 3 and 4 smaller disks, respectively. Suppose the target disk is centered at $(0, 0)$, the minimum required radius and center locations of small disks are shown in Table II.   

\begin{figure}
  \centering
  \subfigure[Covering solution with 3 stop points]{
    \begin{minipage}[t]{0.43\textwidth}
     \includegraphics[width=2.0 in]{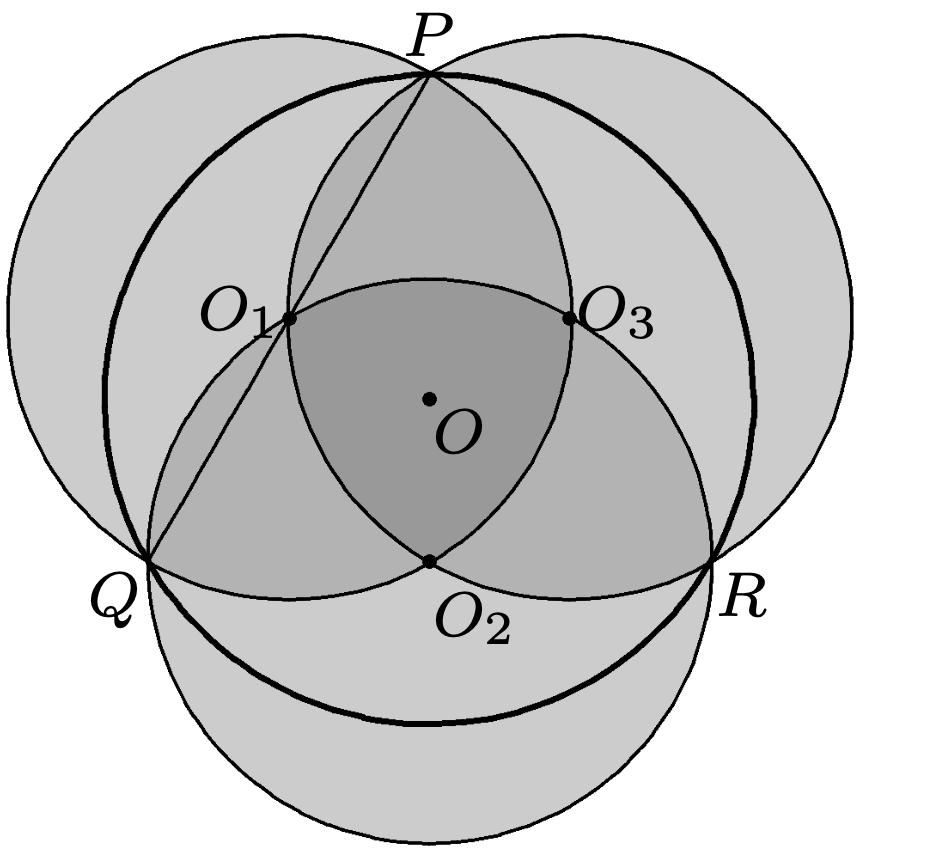}
    \end{minipage}}
  \subfigure[Covering solution with 4 stop points]{
    \begin{minipage}[t]{0.43\textwidth}
     \includegraphics[width=2.0 in]{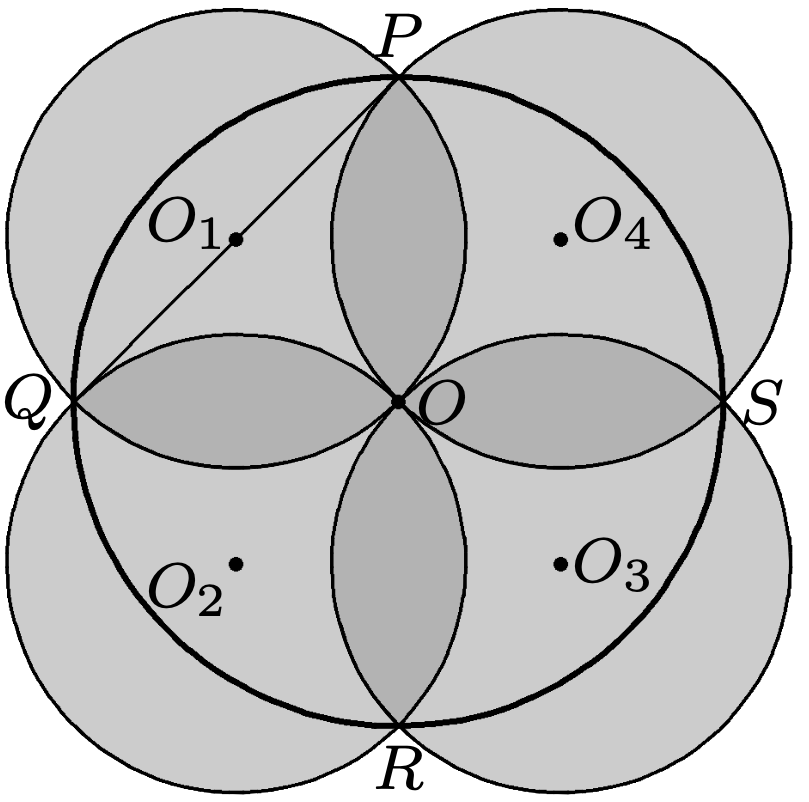}
     \end{minipage}} 
    \caption{Disk covering illustration.}
  \label{fig:disk-covering}
\end{figure}

\begin{table}[htbp]
\label{table:disk_covering}
\caption{Number, Radii, and Locations of Small Disks In Disk Covering Problem}
\begin{center}
\begin{tabular}{|l|l|l|}
\hline
$\mathbf{K}$ & $\mathbf{R_{min}}$ & \textbf{Center positions of small disks}\\
\hline
$1$, $2$ & R & $(0, 0)$\\
\hline
$3$ & $\frac{\sqrt{3}R}{2}$ & $(-\frac{\sqrt{3}R}{4}, \frac{R}{4}), (0, -\frac{R}{2}), (\frac{\sqrt{3}R}{4}, \frac{R}{4})$\\
\hline
$4$ & $\frac{\sqrt{2}}{2}R$ & $(-\frac{R}{2}, \frac{R}{2}), (-\frac{R}{2}, -\frac{R}{2}), (\frac{R}{2}, -\frac{R}{2}), (\frac{R}{2}, \frac{R}{2})$\\
\hline
$\cdots$ & $\cdots$ & $\cdots$ \\
\hline
\end{tabular}
\end{center}
\end{table}

Assume that the minimum number of SPs to cover the target area is $K^*$, then the following condition should hold:
\begin{equation}
	R_{min}(K^*) < R_{LoS} < R_{min}(K^* + 1).
\end{equation}
According to Table II, the minimum number and optimal locations of SPs can then be easily obtained given the values of $R_{LoS}$ and $R$.

\begin{figure}
  \centering
  \includegraphics[width= 3.3in]{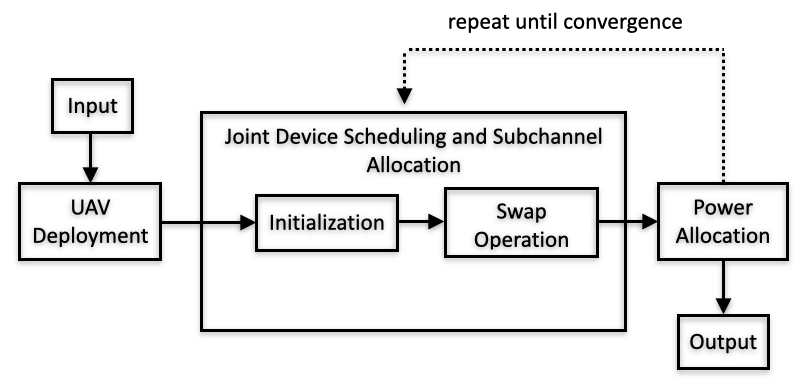}
  \caption{Flow chart of the joint UAV deployment, device scheduling, and resource allocation algorithm.}
  \label{fig:whole_algorithm}
\end{figure}

After figuring out the UAV deployment, device scheduling, subchannel allocation and power allocation strategies, it is worth considering how to solve the joint problem optimally. We propose a joint UAV deployment, device scheduling and resource allocation algorithm (JUDDSRA) as shown in Fig.~\ref{fig:whole_algorithm}. We first solve the UAV deployment problem as discussed in this section, after which the device scheduling and subchannel allocation problem is jointly solved through JDSSA-1 or JDSSA-2. As the matching status between IoT devices and SS units and the power allocation inside each SS unit jointly influence EE, the power allocation is executed through DABPA iteratively after the convergence of JDSSA-1/JDSSA2. This process repeats until the results of device scheduling, subchannel allocation and power allocation keep unchanged, which means the joint algorithm finally converges.  
\section{Numerical Results and Discussions}
\label{sec:numerical-results}
In this section, we investigate the performance of the proposed joint UAV deployment, device scheduling and resource allocation algorithm through simulations. For our simulations, the IoT devices are randomly located within a geographical area of radius $350$ m. The specific parameter value settings are summarized in Table III unless otherwise specified. When applicable, we compare our results with pre-deployed stationary UAV scenario and OMA scenario while adopting the subchannel and power allocation algorithms proposed in Section IV and V. All statistical results are averaged over a large number of independent runs. 
\begin{table}[htbp]
\caption{Simulation Parameters}
\begin{center}
\begin{tabular}{|p{0.07\textwidth}|p{0.35\textwidth}|p{0.10\textwidth}|}
\hline
\textbf{Parameter} & {\textbf{Description}} & {\textbf{Value}} \\
\hline
\hline
{$R$} & Cellular radius & {$350$ m}\\
\hline
{$H$} & UAV altitude & {$150$ m}\\
\hline
{$N$} & Total number of subchannels & {$5$}\\
\hline
{$\theta^{thr}$} & Minimum elevation angle of LoS coverage & {$\frac{\pi}{6}$} \\
\hline
{$\eta$} & Unit power gain at $d_0 = 1$ m & {$1.4\times 10^{-4}$}\\
\hline
{$\sigma^2$} & Noise power & {$-99$ dBm} \\
\hline
{$P^{max}$} & Maximum transmit power of IoT devices & {$500$ mW}\\
\hline
{$P_{min}$} & Minimum transmit power of IoT devices & {$100$ mW}\\
\hline
{$P_{0}$} & Fixed circuit power consumption & {$1$ mW} \\
\hline
{$q$} & Maximum number of IoT devices associated to the same SS unit & {$3$}\\
\hline
\end{tabular}
\end{center}
\label{table:2}
\end{table}
\subsection{Snapshot of the UAV deployment and device scheduling status}
Fig.~\ref{fig:snapshot} plots the 2-D system realization with 60 IoT devices deployed in the cell. In this snapshot, the coloured stars represent projection points of the UAV's SPs on the ground,  while dots in the same colour remark the IoT devices scheduled to transmit to the UAV at corresponding SPs. Dots in black indicate the IoT devices not scheduled for transmission. According to Table II and the values of area radius and minimum elevation angle of UAV LoS coverage stated in Table III, the UAV is supposed to stop at $4$ SPs, i.e.$(-\frac{R}{2}, \frac{R}{2}), (-\frac{R}{2}, -\frac{R}{2}), (\frac{R}{2}, -\frac{R}{2}), (\frac{R}{2}, \frac{R}{2})$, assuming the area is centered at $(0, 0)$. It is shown in Fig.~\ref{fig:snapshot} that the IoT devices are scheduled to transmit to UAV at the nearest SP for the sake of higher channel power gain. 
\begin{figure}
  \centering
  \includegraphics[width= 3.3in]{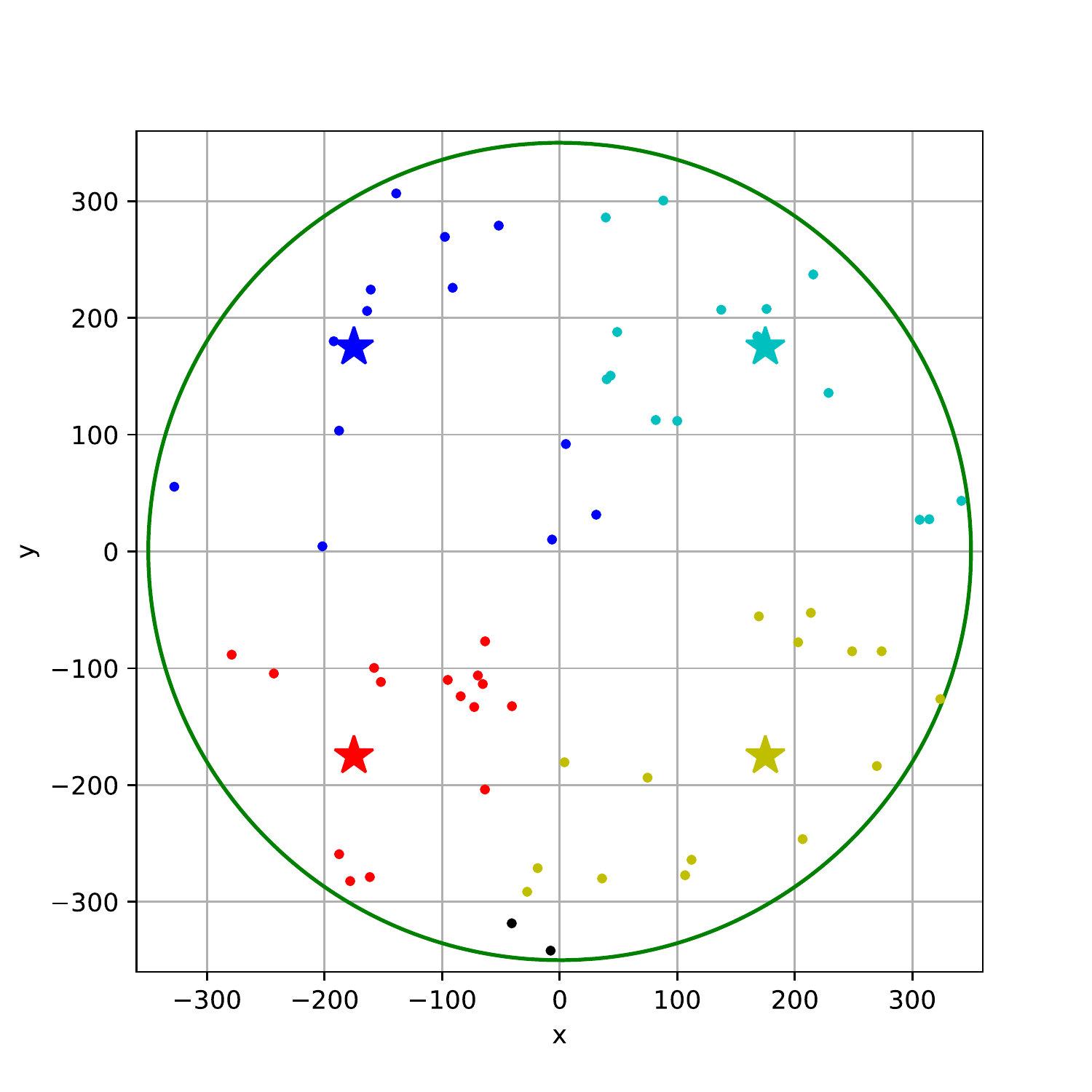}
  \caption{UAV deployment and device scheduling status with 60 IoT devices.}
  \label{fig:snapshot}
\end{figure}

\subsection{Convergence of the proposed algorithms}
Fig.~\ref{fig:swapCdf} shows the cumulative distribution function (CDF) of the number of swap operations for the matching process in JDSSA-1, and thus proves the convergence of JDSSA-1 as stated in \textbf{Theorem 1}. The CDF shows that JDSSA-1 converges within a small number of swap operations for different number of IoT devices. For example, when there are $50$ IoT devices in the network, on average a maximum of $80$ swap operations is required for JDSSA-1 to converge. It can also be observed that the number of swap operations increases with larger number of IoT devices, due to the improved probability of the existence of swap-blocking pairs.   
\begin{figure}
  \centering
  \includegraphics[width= 3.3in]{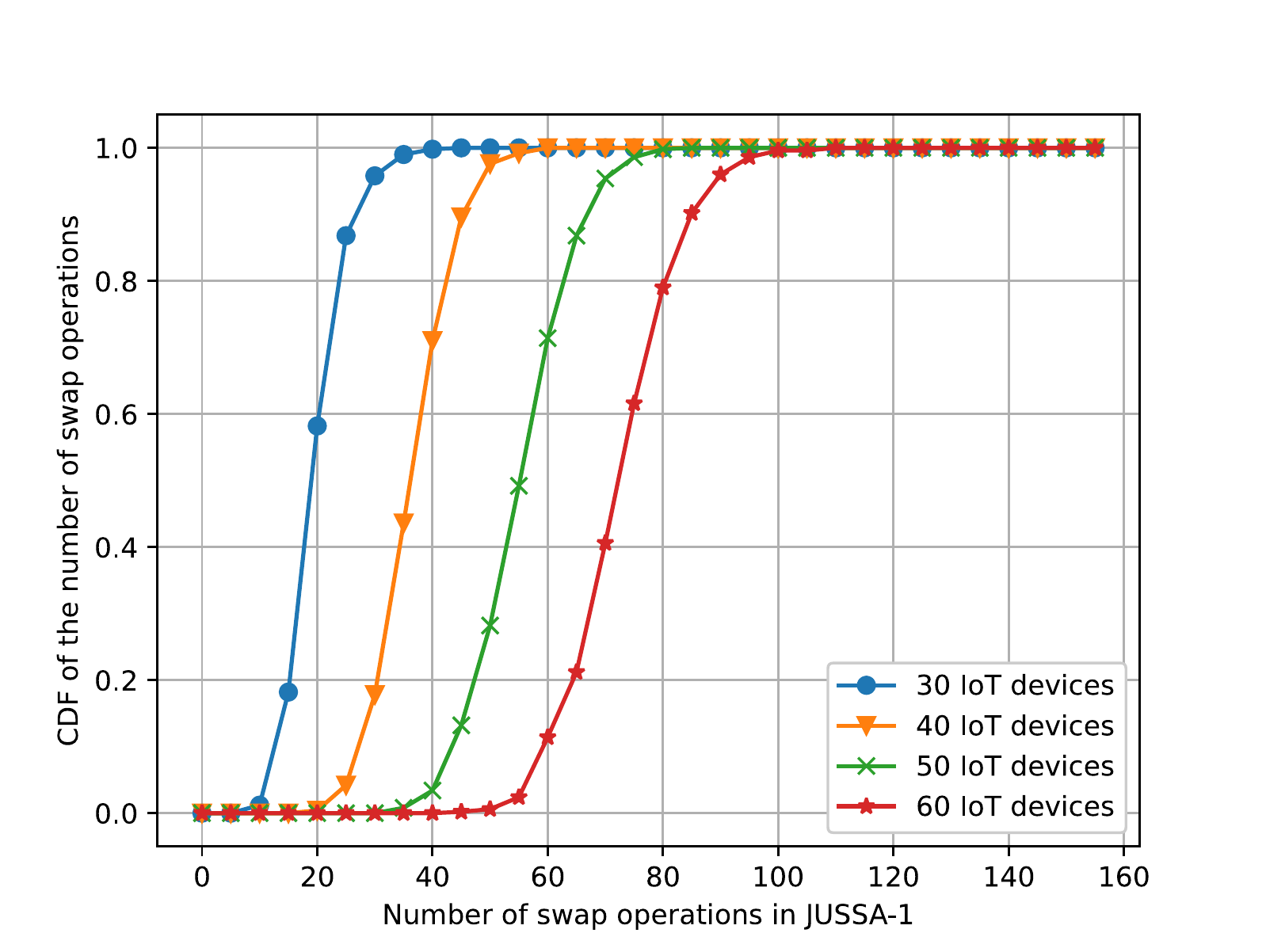}
  \caption{CDF of the number of swap operations in JDSSA-1.}
  \label{fig:swapCdf}
\end{figure}

\begin{figure}
  \centering
  \includegraphics[width= 3.3in]{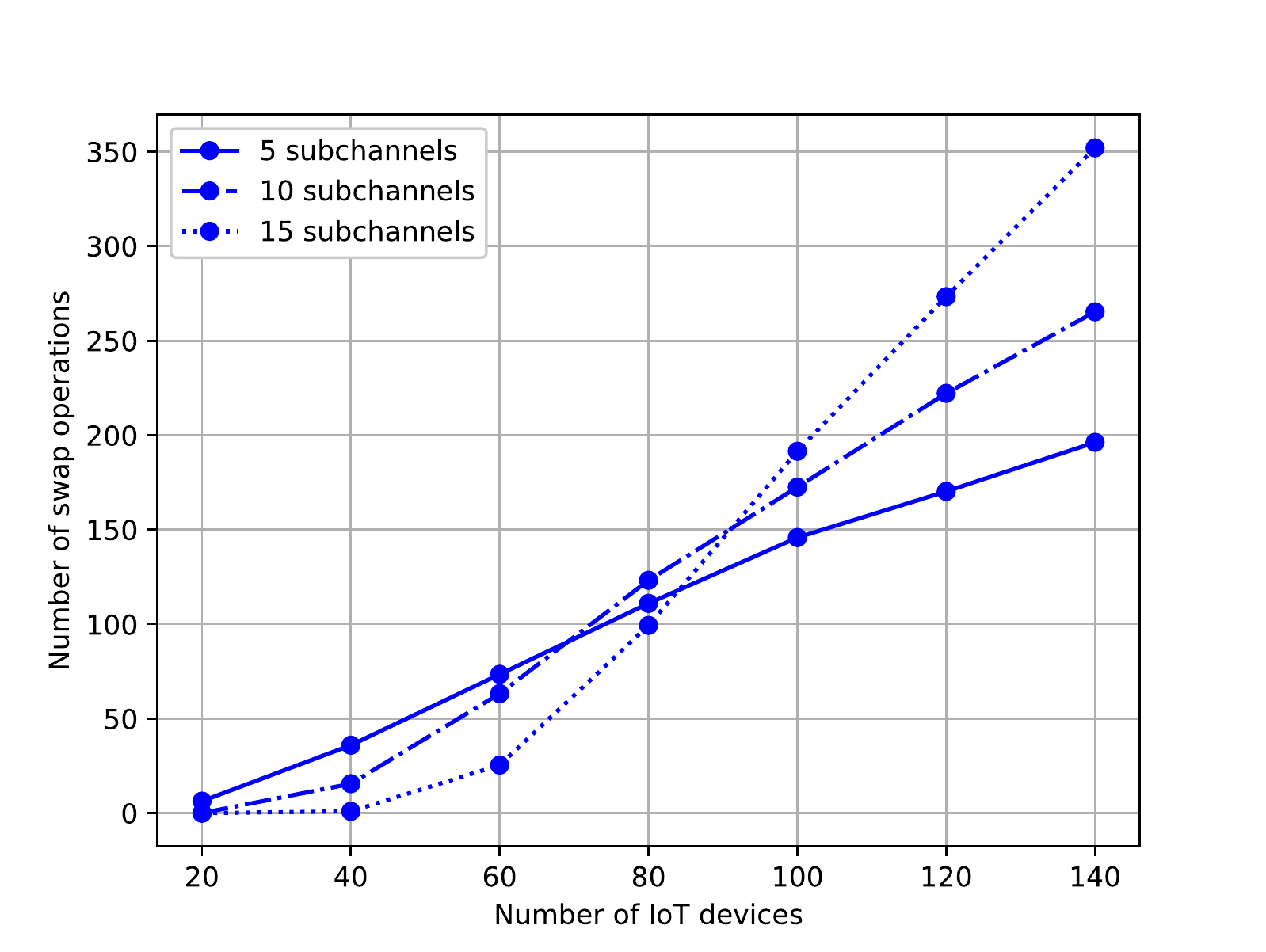}
  \caption{Number of swap operations versus number of subchannels.}
  \label{fig:num_swap_operation}
\end{figure} 

Fig.~\ref{fig:num_swap_operation} demonstrates the number of swap operations in JDSSA-1 with different number of subchannels and IoT devices. One can observe that, under small number of IoT devices, the number of swap operations decreases with the number of subchannels. This is due to the fact that, when the number of subchannels times the number of SPs is larger than the number of IoT devices, all the IoT devices occupy dedicated subchannels after IA, which means there is no need for swap operations as each IoT device has the same channel power gain over different subchannels. However, with the increment of the number of IoT devices, the number of swap operations improves faster for the cases with larger number of subchannels. This is because IoT devices have more space of paired subchannels to exchange to obtain higher EE. 
\subsection{NOMA case versus OMA case}
Results for demonstrating the comparison between UAV-aided IoT networks with NOMA and OMA protocol are shown in Fig.~\ref{fig:EE_OMA} and Fig.~\ref{fig:access_num}. As a baseline algorithm, the result of OMA case is obtained by running the proposed algorithm JUDDSRA with the maximum number of IoT devices occupying each SS unit, i.e., quota, set to be $1$. Moreover, we also present the results for the quota value set to be $2$ and $3$, respectively. From Fig.~\ref{fig:EE_OMA}, we can see that the EE first increases with the number of IoT devices for both NOMA and OMA case. Then, after a threshold, the EE for OMA case tends to be stable. This is because the number of accessed IoT devices is restricted to a certain value after which EE increases slightly due to the multi-user diversity. For both $q = 2$ and $q = 3$ case, EE begins to decrease at some certain points and finally gets stable when the number of IoT devices reaches to $50$ and $70$, respectively. This decrement can be explained by the higher interference introduced by more devices sharing the same SS unit. After  all the SS units get saturated, which is restricted by the quota value, EE changes slightly with the multi-user diversity. We can also observe from Fig.~\ref{fig:EE_device_num} that EE decreases when more IoT devices are allowed to occupy one SS unit, which is because of the higher interference caused by spectrum sharing.
\begin{figure}
  \centering
  \includegraphics[width= 3.3in]{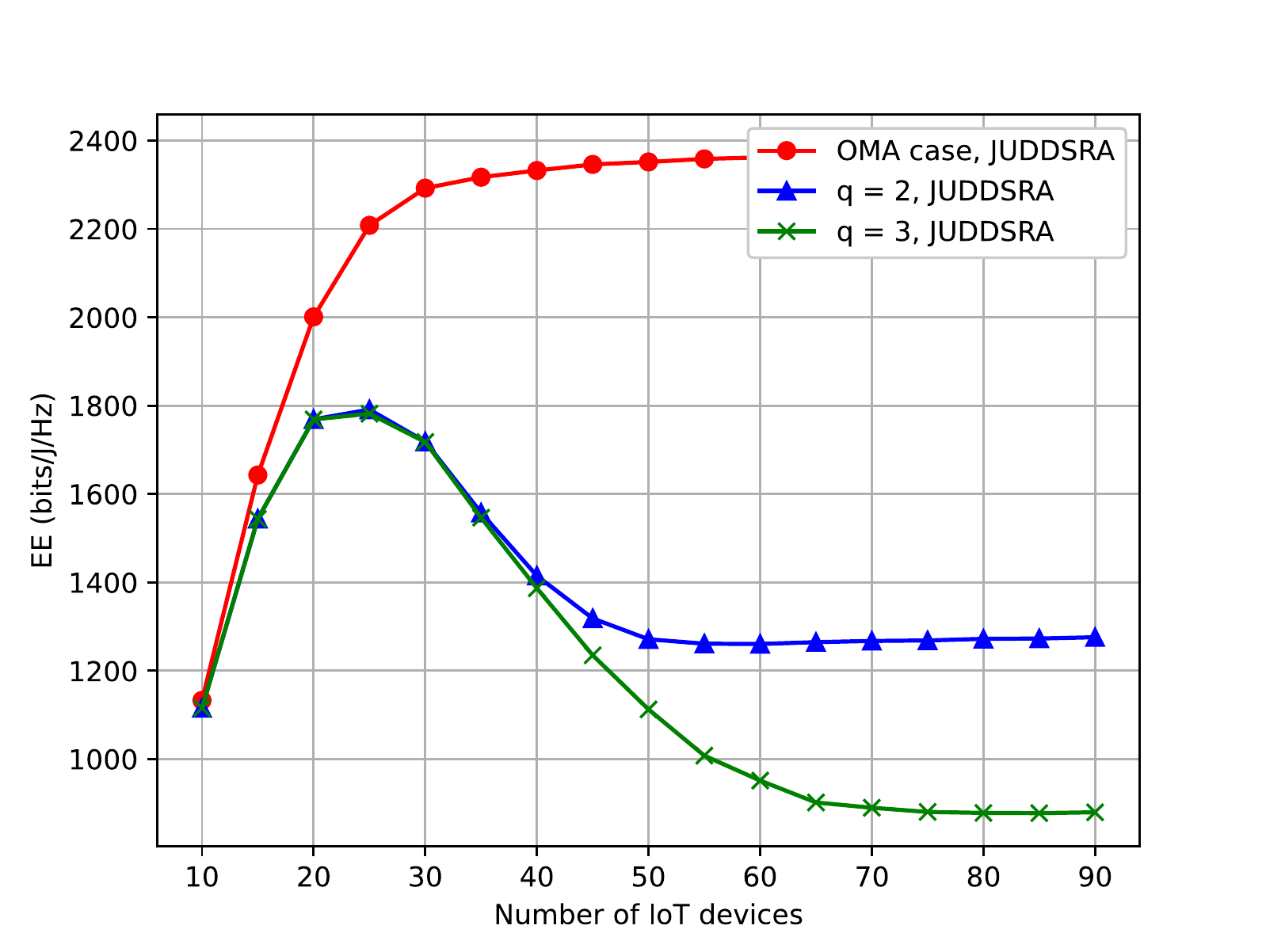}
  \caption{Comparison of average EE between NOMA and OMA cases.}
  \label{fig:EE_OMA}
\end{figure}

\begin{figure}
  \centering
  \includegraphics[width= 3.3in]{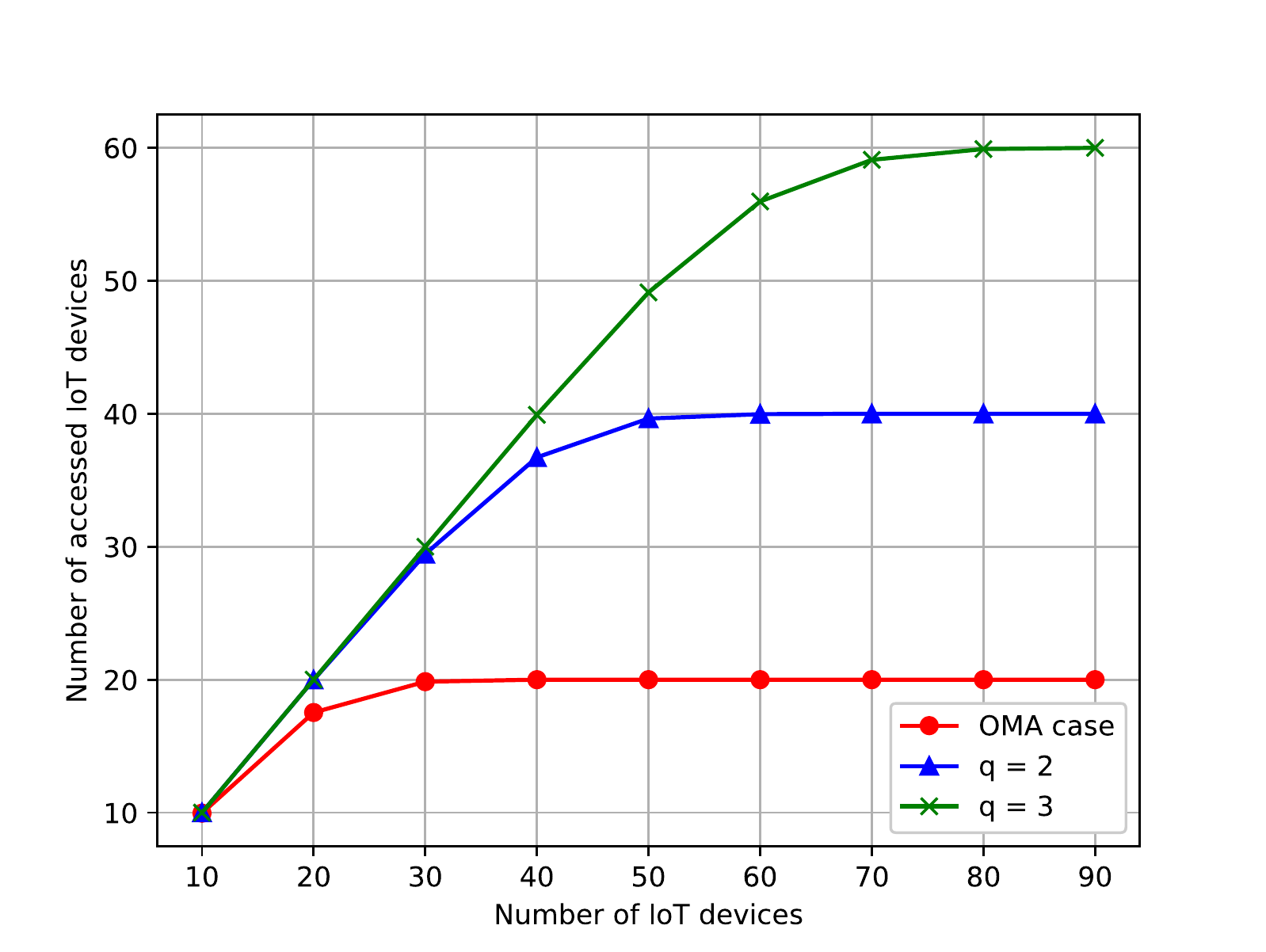}
  \caption{Number of accessed IoT devices with different number of IoT devices.}
  \label{fig:access_num}
\end{figure}

Fig.~\ref{fig:access_num} shows the number of accessed IoT devices for NOMA and OMA cases. As expected, NOMA can support more devices to upload data to UAV compared to the OMA case. In addition, the number of accessed devices increases with higher quota value. When all SS units get saturated with the predefined quota value, the number of accessed IoT devices remains unchanged even when the number of IoT devices in network keeps increasing.

\subsection{JDSSA-1 versus JDSSA-2}
Fig.~\ref{fig:EE_JDSSA2} shows the performance of JDSSA-1 and JDSSA-2 under different numbers of iterations, i.e., $t_{max}$, where we set fixed power transmission of IoT devices with $500$~mW. One can observe that, with lower number of iterations, JDSSA-2 achieves worse performance compared to JDSSA-1 because of the introduced irrational decisions. However, with the increment of iterations, the performance achieved by JDSSA-2 is improved thanks to the expanded region of matching states. For example, with $t_{max} = 10^5$, JDSSA-2 achieves obvious improvement on EE compared to JDSSA-1 when the number of IoT devices reaches to $30$.

\begin{figure}
  \centering
  \includegraphics[width= 3.3in]{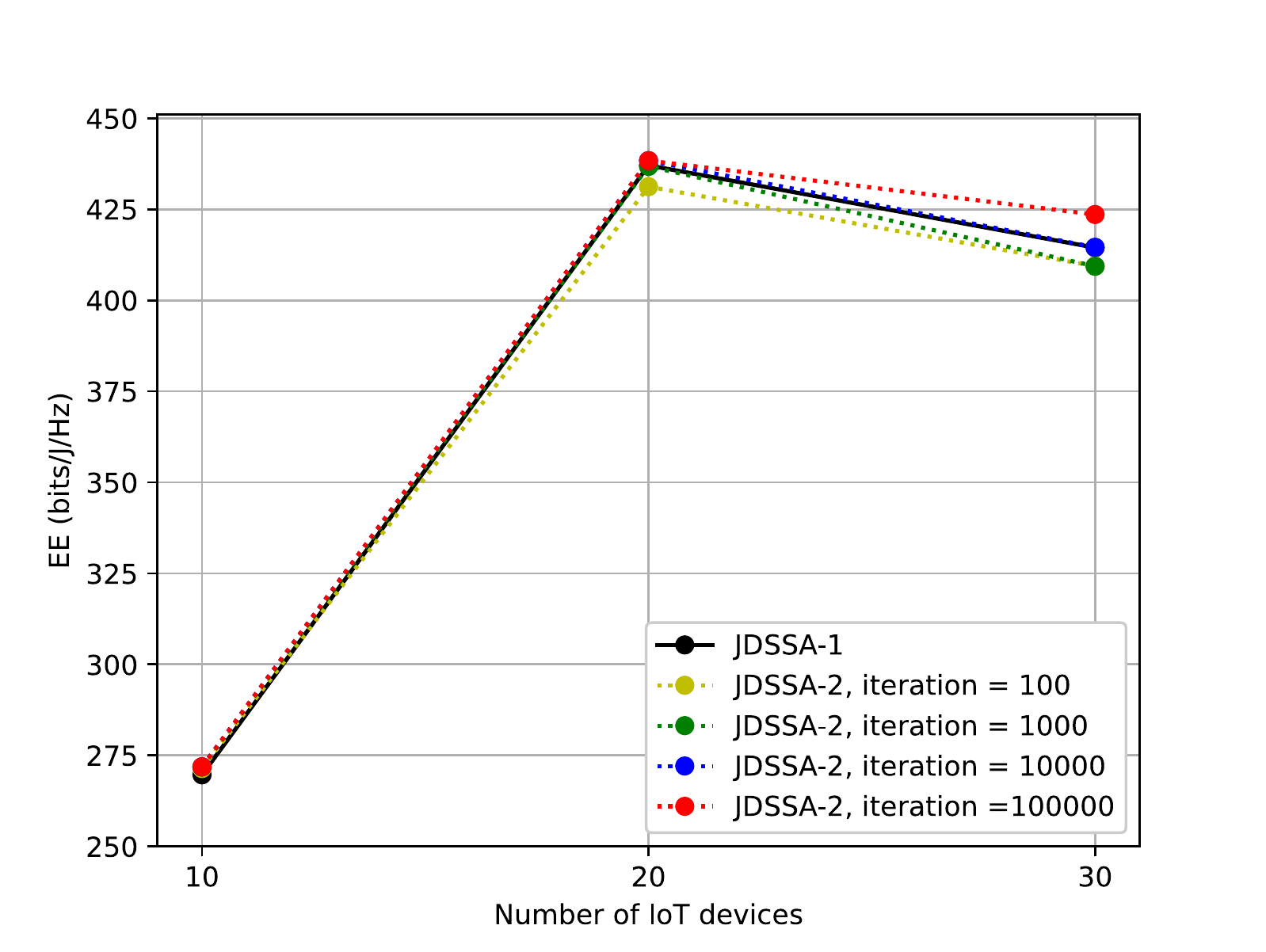}
  \caption{Comparison of average EE between JDSSA-1 and JDSSA-2 with $q = 2$.}
  \label{fig:EE_JDSSA2}
\end{figure} 

\subsection{JUDDSRA versus benchmark algorithms}
In Fig.~\ref{fig:EE_device_num}, we compare the performance of JUDDSRA with the fixed power allocation scheme. The fixed power allocation scheme is given by allowing each IoT device to transmit with the maximum power $500$ mW, while applying the proposed UAV deployment scheme and JDSSA-1 for UAV deployment and joint device scheduling and subchannel allocation, respectively. We also demonstrate the EE performance under different quota values. It can be seen that JUDDSRA achieves much higher EE than the fixed power allocation scheme, which therefore illustrates the importance of applying DABPA for power allocation.
\begin{figure}
  \centering
  \includegraphics[width= 3.3in]{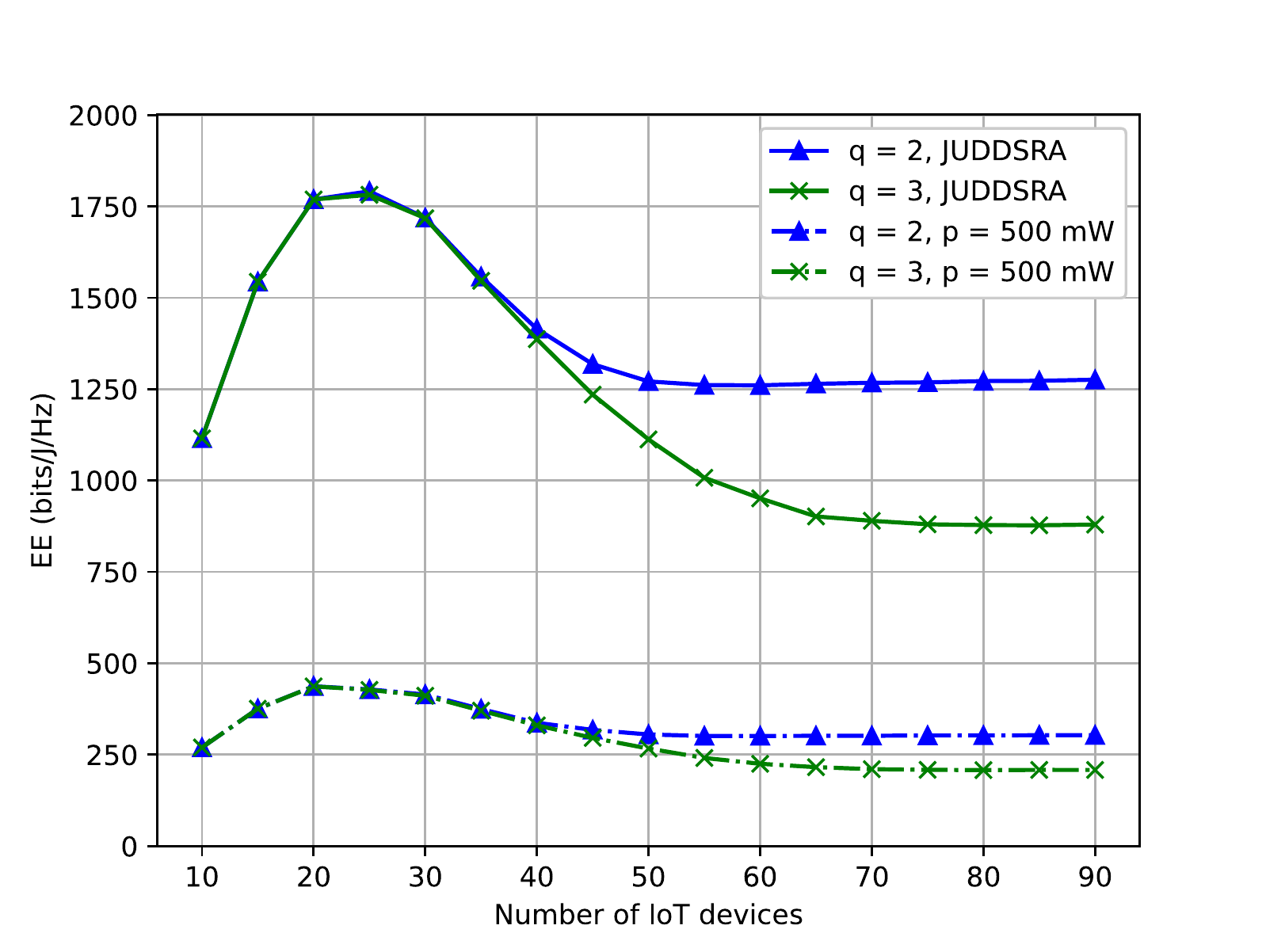}
  \caption{Comparison of average EE between JUDDSRA and fixed power allocation scheme.}
  \label{fig:EE_device_num}
\end{figure}

Fig.~\ref{fig:EE_max_power} demonstrates the comparison of average EE between JUDDSRA and matching algorithm without swap operation. For the latter one, we apply the same UAV deployment and power allocation schemes as in JUDDSRA except that IA is adopted for matching between IoT devices and SS units. For better showing the influence of the maximum transmit power on EE performance, we also expand the feasible region by lowering the minimum transmit power to $-40$~dBm. The figure implies that JUDDSRA achieves much higher EE compared to the matching algorithm without swap operation, which therefore reflects the importance of swap operation in matching problems with peer effects. It is also shown that EE firstly increases with maximum transmit power but gets saturated at some certain points due to the tradeoff between achievable data rate and power consumption.  
\begin{figure}
  \centering
  \includegraphics[width= 3.3in]{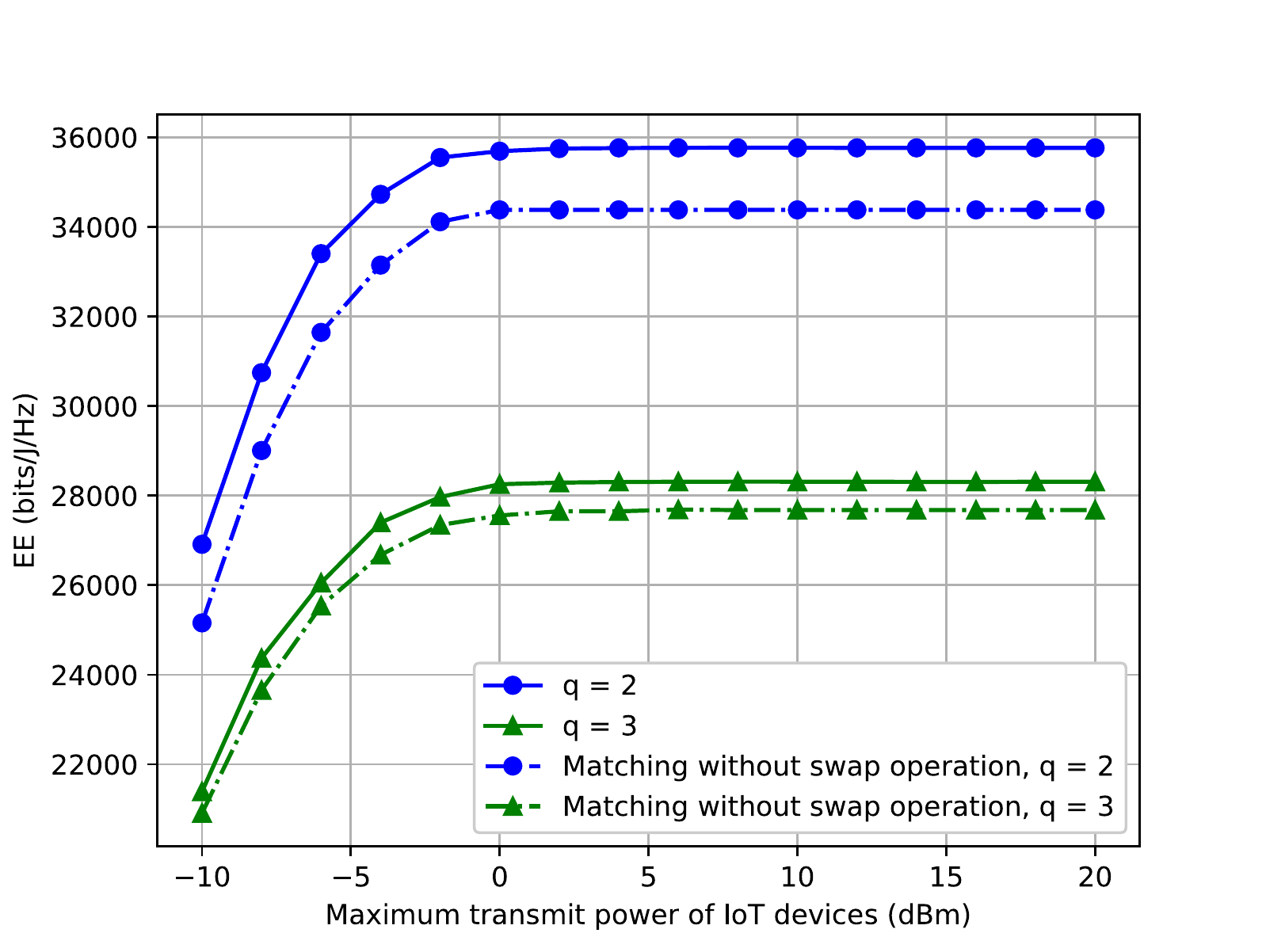}
  \caption{Comparison of average EE between JUDDSRA and matching algorithm without swap operation.}
  \label{fig:EE_max_power}
\end{figure}

\begin{figure}
  \centering
  \includegraphics[width= 3.3in]{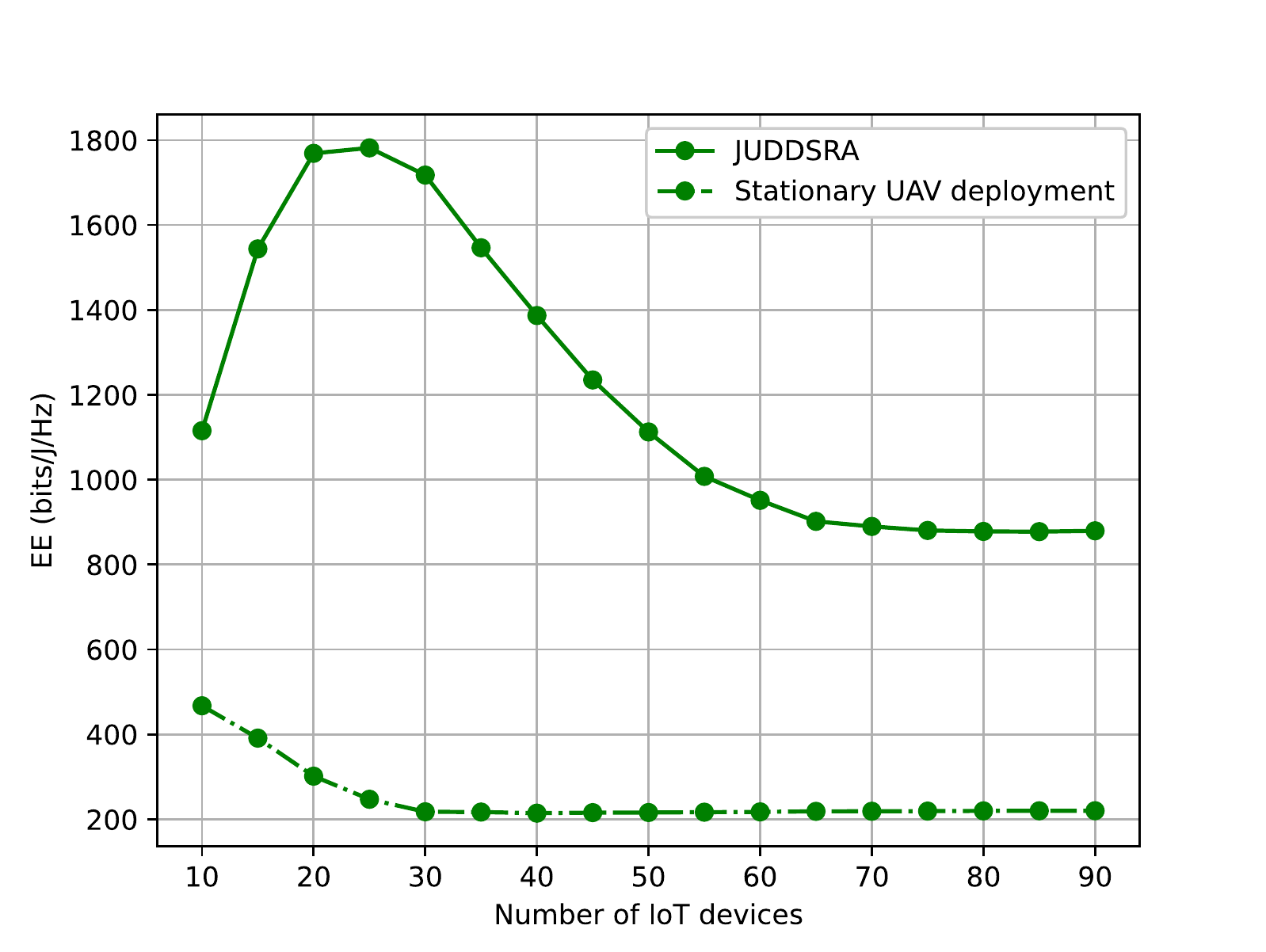}
  \caption{Comparison of average EE between JUDDSRA and stationary UAV deployment scenario.}
  \label{fig:EE_stationary_UAV}
\end{figure} 

Fig.~\ref{fig:EE_stationary_UAV} depicts the comparison of JUDDSRA with stationary UAV deployment case. For the stationary UAV deployment case, JDSSA-1 and DABPA are applied for the device scheduling, subchannel allocation and power allocation, respectively, where the UAV is assumed to be located at $(0, 0, H)$. We find out that JUDDSRA achieves much higher EE compared to the stationary UAV deployment case, which thus proves the importance of UAV deployment in UAV-aided networks.   
\section{Conclusions}
In this paper, we investigated the joint UAV deployment, device scheduling and resource allocation problem in UAV-aided IoT networks with NOMA, with the objective of maximizing network EE. For the device scheduling and subchannel allocation problem, we proposed a swap matching algorithm based on matching game. A novel concept of `exploration' was also introduced into the matching game to search for global optimal solutions. Properties of the proposed algorithms were analyzed in terms of stability, convergence, complexity and optimality. For power allocation, an efficient algorithm based on Dinkelbach's algorithm was proposed to obtain the optimal solution. By formulating the UAV deployment as a disk covering problem, we proposed a low-complexity approach to determine the minimum number of optimal stop points for UAV to collect data from ground devices via line-of-sight links. Numerical results showed that the proposed algorithm achieved much higher EE compared to the stationary UAV deployment and fixed power allocation cases. It was also shown that UAV-aided IoT networks with NOMA  outperformed the OMA case in terms of number of accessed devices.

\bibliographystyle{IEEEtran}
\bibliography{uav_noma_trans}

\end{document}